\newtheorem{thm}{Theorem}[section]
\newtheorem{claim}[thm]{Claim}
\theoremstyle{definition}
\newtheoremstyle{cited}%
{.5\baselineskip\@plus.2\baselineskip
    \@minus.2\baselineskip}
{.5\baselineskip\@plus.2\baselineskip
    \@minus.2\baselineskip}
{\itshape}
{\parindent}
{}
{.}
{.5em}
{\textsc{\thmname{#1}} \thmnote{\normalfont#3}}
\theoremstyle{cited}
\newcommand{\cost}{{\tt cost}}
\newcommand{\N}{{\ensuremath{\mathbb N}}}
\newcommand{\B}{{\ensuremath{\mathbb B}}}
\newcommand{\mycomment}[1]{}
\newcommand{\rev}[1]{#1}
\newcommand{\twomonoid}{2-monoid\xspace}
\newcommand{\twomonoids}{2-monoids\xspace}
\newcommand{\bagsetmax}{{\sc Bag-Set Maximization}\xspace}
\newcommand{\bagsetmaxdecision}{{\sc Bag-Set Maximization Decision}\xspace}
\newcommand{\true}{\mathsf{true}}
\newcommand{\false}{\mathsf{false}}
\newcommand{\ov}{\overline}
\newcommand{\card}[1]{\left\lvert #1 \right\rvert}
\newcommand{\defeq}{\stackrel{\mathrm{def}}{=}}
\newcommand{\cd}{\text{ :- }}
\newcommand{\vars}{\mathsf{vars}}
\newcommand{\atoms}{\mathsf{at}}
\newcommand{\bcq}{\textsf{BCQ}\xspace}
\newcommand{\bcqs}{\textsf{BCQs}\xspace}
\newcommand{\sjfbcq}{\textsf{SJF-BCQ}\xspace}
\newcommand{\sjfbcqs}{\textsf{SJF-BCQs}\xspace}
\newcommand{\dom}{\textsf{Dom}}
\newcommand{\supp}{\textsf{supp}}
\newcommand{\size}[1]{|#1|}
\newcommand{\shapley}{\mathsf{Shapley}}
\newcommand{\sat}{\mathsf{\#Sat}}
\newcommand{\Dexo}{\mathcal{D}^{\mathsf{x}}}
\newcommand{\Dendo}{\mathcal{D}^{\mathsf{n}}}
\newcommand{\Dr}{\mathcal{D}^{\mathsf{r}}}
\newcommand{\sharpP}{\mathsf{\#P}}
\newcommand{\np}{\mathsf{NP}}
\newcommand{\ptime}{\mathsf{P}}
\newcommand{\wone}{\mathsf{W[1]}}
\newcommand{\fpt}{\mathsf{FPT}}
\newcommand{\BCBS}{{\sf BCBS}\xspace}
\newcommand{\fptas}{{\sf FPTAS}\xspace}
\newcommand{\calD}{\mathcal{D}}
\newcommand{\calS}{\mathcal{S}}
\newcommand{\commentout}[1]{{}}
\begin{document}

\title{A Unifying Algorithm for Hierarchical Queries}

\author{Mahmoud Abo Khamis}
\email{mahmoudabo@gmail.com}
\orcid{0000-0003-3894-6494}
\affiliation{%
  \institution{RelationalAI}
  \city{Berkeley}
  \state{California}
  \country{USA}
}

\author{Jesse Comer}
\email{jacomer@seas.upenn.edu}
\orcid{0009-0006-9734-3457}
\affiliation{%
  \institution{University of Pennsylvania}
  \city{Philadelphia}
  \state{Pennsylvania}
  \country{USA}}

\author{Phokion G.\ Kolaitis}
\email{kolaitis@ucsc.edu}
\orcid{0000-0002-8407-8563}
\affiliation{%
  \institution{UC Santa Cruz \& IBM Research}
  \city{Almaden}
  \state{California}
  \country{USA}
}

\author{Sudeepa Roy}
\email{sudeepa@cs.duke.edu}
\orcid{0009-0002-8300-7891}
\affiliation{%
 \institution{Duke University}
 \city{Durham}
 \state{North Carolina}
 \country{USA}}

\author{Val Tannen}
\email{val@seas.upenn.edu}
\orcid{0009-0008-6847-7274}
\affiliation{%
  \institution{University of Pennsylvania}
  \city{Philadelphia}
  \state{Pennsylvania}
  \country{USA}}

\renewcommand{\shortauthors}{Mahmoud Abo Khamis, Jesse Comer, Phokion G.\ Kolaitis, Sudeepa Roy, \& Val Tannen}

\begin{CCSXML}
<ccs2012>
   <concept>
       <concept_id>10002951.10002952.10002953.10002955</concept_id>
       <concept_desc>Information systems~Relational database model</concept_desc>
       <concept_significance>500</concept_significance>
       </concept>
   <concept>
       <concept_id>10003752.10010070.10010111.10011711</concept_id>
       <concept_desc>Theory of computation~Database query processing and optimization (theory)</concept_desc>
       <concept_significance>500</concept_significance>
       </concept>
   <concept>
       <concept_id>10003752.10010070.10010111.10003623</concept_id>
       <concept_desc>Theory of computation~Data provenance</concept_desc>
       <concept_significance>500</concept_significance>
       </concept>
 </ccs2012>
\end{CCSXML}

\ccsdesc[500]{Information systems~Relational database model}
\ccsdesc[500]{Theory of computation~Database query processing and optimization (theory)}
\ccsdesc[500]{Theory of computation~Data provenance}

\keywords{hierarchical queries; dichotomy; bag-set semantics; database repair; probabilistic databases; Shapley values}

\begin{abstract}
The class of hierarchical queries is known to define the boundary of the dichotomy between tractability and intractability for the following
two extensively studied problems about self-join free Boolean conjunctive queries (\sjfbcq):  (i)  evaluating a
\sjfbcq on a tuple-independent probabilistic database; (ii)  computing the Shapley value of a fact in a database on which  a \sjfbcq evaluates to true.
Here, we establish that hierarchical queries define also the boundary of the dichotomy between tractability and intractability for a different natural algorithmic problem, which we call the \emph{bag-set maximization} problem.
The bag-set maximization problem associated with a \sjfbcq $Q$ asks: given a database $\calD$, find the biggest value that $Q$ takes under bag semantics on a database $\calD'$ obtained from $\calD$ by adding at most $\theta$ facts from another given database $\calD^r$.

For non-hierarchical queries, we show
that the bag-set maximization problem is an $\np$-complete optimization problem. More significantly,
 for hierarchical queries, we show that all three aforementioned problems (probabilistic query evaluation, Shapley value computation, and bag-set maximization)  admit a single unifying polynomial-time algorithm that
operates on an abstract algebraic structure, called a   {\em \twomonoid}.
Each of the three problems requires a different instantiation of the \twomonoid~tailored for the problem at hand.
\end{abstract}

\maketitle

\section{Introduction}
\label{sec:intro}
Dichotomy theorems are classification results that pinpoint the computational complexity of every problem in a  collection of algorithmic problems of interest  by showing that some of these problems are tractable, while all others are complete for some complexity class, such as $\np$~or $\sharpP$. It is well known that the existence of a dichotomy for a collection of algorithmic problems cannot be taken for granted a priori. Indeed, Ladner \cite{DBLP:journals/jacm/Ladner75} showed that if $\ptime \not = \np$, then there are problems in $\np$ that are neither in $\ptime$ nor $\np$-complete. Thus, unless $\ptime=\np$, no dichotomy theorem for NP exists.

A Boolean conjunctive query (\bcq) $Q$ is called {\em self-join-free} if no two atoms 
in $Q$ share the same relation symbol; we will use the abbreviation
\sjfbcq~for the term \emph{self-join-free Boolean conjunctive query}. 
There is extensive  earlier work  on 
dichotomy theorems for two different collections of algorithmic problems parameterized by 
 \sjfbcqs: (i) evaluating a \sjfbcq~on a tuple-independent probabilistic database; (ii)  computing the Shapley value of a fact in a database on which a \sjfbcq~evaluates to true. 
We now describe these two problems in more detail.
\rev{Note that since all  problems studied in this paper are parameterized by a \sjfbcq $Q$, we always assume that
the query $Q$ is fixed, hence the size of the query is not used in measuring the complexity of the problem. Thus, we use {\em data complexity} as our complexity measure throughout this paper.}

{\sc   Probabilistic Query Evaluation:}  Every fixed \sjfbcq~$Q$ gives rise to the following problem: 
    Given a {\em tuple-independent} probabilistic database $\calD$ (which means that each fact in $\calD$ is viewed as an independent event with a given probability), 
    compute the {\em marginal probability} of $Q$ on $\calD$, i.e., compute the sum of the probabilities of the databases $\calD'$ that satisfy $Q$ and are obtained from $\calD$ by either keeping or removing each fact of $\calD$ independently.
    
{\sc Shapley Value Computation}: 
Every fixed \sjfbcq~$Q$ gives rise to the following problem: Given  a database  $\calD$ on which $Q$ evaluates to true and given a fact $f$ of $\calD$, compute the {\em Shapley value} of  $f$, which, informally, is the ``contribution'' of $f$  to $Q$ evaluating to true on $\calD$. Formally,
    the  Shapley value of  $f$ is defined via the following process:
    Take a permutation of the facts in $\calD$ uniformly at random, and insert the facts one by one
    while re-evaluating $Q$ after each insertion;
    the Shapley value of $f$ is the probability that the insertion of $f$ flips the answer of $Q$ from false to true.

Consider the collection of all \sjfbcqs. 
Dalvi and Suciu \cite{DalviS07} showed that {\sc Probabilistic Query Evaluation} exhibits the following dichotomy: for every fixed \sjfbcq~$Q$, either this problem is solvable in polynomial time or it is $\sharpP$-complete. More recently, Livshits, Bertossi, Kimelfeld, and Sebag \cite{DBLP:journals/lmcs/LivshitsBKS21} showed that {\sc Shapley Value Computation}  exhibits the following dichotomy: for every fixed \sjfbcq~$Q$, either this problem is solvable in polynomial time or it is $\mathsf{FP}^{\sharpP}$-complete, where $\mathsf{FP}^\sharpP$ is the class of all problems that can be solved in polynomial time using a $\sharpP$-oracle.
Remarkably, it turned out that in both these results the boundary of the dichotomy is defined by the class of \emph{hierarchical} \sjfbcqs: if $Q$ is a hierarchical \sjfbcq, then both probabilistic query evaluation and Shapley value computation are solvable in polynomial time, whereas if $Q$ is a non-hierarchical \sjfbcq, then these problems are intractable.
Recall that
a  \sjfbcq $Q$ is {\em hierarchical} if for every two variables $X$ and $Y$ occurring in $Q$,  one of the following three conditions holds:
(i) $\atoms(X)\subseteq \atoms(Y)$; (ii) $\atoms(Y)\subseteq \atoms(X)$; (iii) $\atoms(X)\cap \atoms(Y)=\emptyset$, where if $Z$ is a variable occurring in $Q$, then $\atoms(Z)$ is the set of atoms of $Q$ in which $Z$ occurs. For example, the query $Q_{\mathrm{h}}() \cd E(X,Y)\land F(Y,Z)$ is hierarchical, while the query $Q_{\mathrm{nh}}() \cd  R(X)\land S(X,Y)\land T(Y)$ is not.\footnote{We suppress existential quantifiers in our queries.}

In this paper, we introduce and investigate the collection of 
\emph{bag-set maximization} problems, which is a different collection of natural algorithmic problems parameterized by \sjfbcqs.
\rev{Before describing these problems, we briefly recall the standard notion of {\em bag-set semantic}~\cite{10.5555/260873}.
Under bag-set semantics, the input database is always a {\em set}, i.e. no duplicate facts are allowed, but the output of a query is a {\em bag}, i.e., duplicate answers are allowed.
In the special case of a Boolean conjunctive query $Q$, the value of $Q$ under bag-set semantics on a {\em set} database $\calD$, denoted $Q(\calD)$, is the number of distinct satisfying assignments of $Q$ over $\calD$.}

\rev{{\bagsetmax:} Every fixed \sjfbcq~$Q$ gives rise to the following problem: given two (set) databases $\calD$ and $\calD^r$, and a positive integer $\theta$, find the maximum value $Q(\calD')$ that $Q$ takes under bag-set semantics over all (set) databases $\calD'$ obtained from $\calD$ by adding at most $\theta$ facts from  $\calD^r$.}

Intuitively, $\theta$ is a ``budget'' and $\calD^r$ is a ``repair'' database  used to augment the set database $\calD$ with extra facts so that the value of $Q$ under bag-set semantics is maximized,
provided the ``budget'' is not exceeded.
For example, consider the following query:
\begin{align}
    Q() \cd R(A, B) \wedge S(A, C) \wedge T(A, C, D).
    \label{eq:example:query}
\end{align}
Suppose we are given the (set) database instance $\calD$ in Figure~\ref{fig:ex:bagsetmax:D}.
Initially, $Q$ has one satisfying assignment over $\calD$, namely $(A, B, C, D) = (1, 5, 2, 4)$.
Hence, $Q(\calD)=1$ under bag-set semantics.
Now suppose we are allowed to amend $\calD$ with at most $\theta = 2$ facts from the repair database $\calD^r$ in Figure~\ref{fig:ex:bagsetmax:Dr},
and we want to maximize the resulting $Q(\calD)$.
We could amend $\calD$ with the two facts $R(1, 6)$ and $R(1, 7)$ from $\calD^r$,
which would bring $Q(\calD)$ to 3 since $Q$ now has two extra satisfying assignments,
namely $(A, B, C, D) = (1, 6, 2, 4)$ and $(1, 7, 2, 4)$.
However, a better repair is to amend $\calD$ with the two facts $R(1, 6)$ and $T(1, 2, 9)$,
since this would bring $Q(\calD)$ to 4.
In this example, this would be an optimal repair, hence the answer to this \bagsetmax instance is 4.
\begin{figure}[ht!]
    \begin{subfigure}[t]{0.49\textwidth}
        \centering
        \begin{minipage}[t]{0.2\textwidth}
            \centering
            $R$~\\\vspace{-0.25cm}
            \begin{tabular}[t]{|c|c|}
                \hline
                $A$ & $B$ \\\hline
                $1$ & $5$ \\\hline
            \end{tabular}
        \end{minipage}
        \begin{minipage}[t]{0.2\textwidth}
            \centering
            $S$~\\\vspace{-0.25cm}
            \begin{tabular}[t]{|c|c|}
                \hline
                $A$ & $C$ \\\hline
                $1$ & $1$ \\
                $1$ & $2$ \\\hline
            \end{tabular}
        \end{minipage}
        \begin{minipage}[t]{0.3\textwidth}
            \centering
            $T$~\\\vspace{-0.25cm}
            \begin{tabular}[t]{|c|c|c|}
                \hline
                $A$ & $C$ & $D$\\\hline
                $1$ & $2$ & $4$\\\hline
            \end{tabular}
        \end{minipage}
        \caption{Database $\calD$}
        \label{fig:ex:bagsetmax:D}
    \end{subfigure}
    \begin{subfigure}[t]{0.49\textwidth}
        \centering
        \begin{minipage}[t]{0.2\textwidth}
            \centering
            $R$~\\\vspace{-0.25cm}
            \begin{tabular}[t]{|c|c|}
                \hline
                $A$ & $B$ \\\hline
                $1$ & $6$ \\
                $1$ & $7$ \\\hline
            \end{tabular}
        \end{minipage}
        \begin{minipage}[t]{0.2\textwidth}
            \centering
            $S$~\\\vspace{-0.25cm}
            \begin{tabular}[t]{|c|c|}
                \hline
                $A$ & $C$ \\\hline
                & \\\hline
            \end{tabular}
        \end{minipage}
        \begin{minipage}[t]{0.3\textwidth}
            \centering
            $T$~\\\vspace{-0.25cm}
            \begin{tabular}[t]{|c|c|c|}
                \hline
                $A$ & $C$ & $D$\\\hline
                $1$ & $1$ & $4$\\
                $1$ & $2$ & $9$\\\hline
            \end{tabular}
        \end{minipage}
        \caption{Database $\Dr$}
        \label{fig:ex:bagsetmax:Dr}
    \end{subfigure}
    \caption{Input instance for \bagsetmax on query $Q$ from Eq.~\eqref{eq:example:query}.
    The value of $\theta$ is 2.}
    \Description{Input instance for \bagsetmax on query $Q$ from Eq.~\eqref{eq:example:query}.
    The value of $\theta$ is 2.}
    \label{fig:ex:bagsetmax}
\end{figure}

In some sense, the {\bagsetmax} problem is  dual  to the thoroughly studied {\sc Resilience} problem \cite{DBLP:journals/pacmmod/MakhijaG23,DBLP:conf/pods/FreireGIM20,DBLP:journals/pvldb/FreireGIM15}, which asks for the minimum number of facts that can be removed from a given database so that a conjunctive query that was true on the original database becomes false in the resulting database.
It should be noted, however, that complexity results about the {\sc Resilience} problem for \sjfbcqs~do not yield complexity results about the {\bagsetmax} problem for \sjfbcqs. The reason is that one has to also ``dualize'' the query; hence, a \sjfbcq~$Q$ translates to a query $Q^*$ definable by a universal first-order sentence with some additional syntactic constraints.

On the face of it, for every fixed \sjfbcq~$Q$, {\bagsetmax} is a \emph{polynomially-bounded $\np$-optimization problem}, which means that: (i)  the optimum value is bounded by a polynomial in the size of the input; (ii) the underlying decision problem is in $\np$. Here, the underlying decision problem asks: given two databases $\calD$ and $\calD^r$, and two positive integers $\theta$ and $\tau$, is there a database $\calD'$ obtained from $\calD$ by adding at most $\theta$ facts from $\calD^r$, so that $Q(\calD')\geq \tau$?  
The polynomial boundedness of {\bagsetmax} implies that,  
for each fixed \sjfbcq~$Q$, this problem is also in the class
$\mathsf{FP}^{\np[\log(n)]}$ of function problems solvable in polynomial time using logarithmically-many calls to an $\np$-oracle, where $n$ is the size of the input instance.
As such, the  worst-case complexity of {\bagsetmax}  is lower than the worst-case complexity of {\sc Probabilistic Query Evaluation} and {\sc Shapley Value Computation}, unless $\ptime=\np$.\footnote{In fact, Krentel \cite{DBLP:journals/jcss/Krentel88} showed that if $\ptime\not = \np$, then $\mathsf{FP}^{\np[\log(n)]}\not =   \mathsf{FP}^{\np}$ (hence, also $\mathsf{FP}^{\np[\log(n)]}\not = \mathsf{FP}^\sharpP$, etc).}

We show that {\bagsetmax} exhibits the following dichotomy: if $Q$ is a hierarchical \sjfbcq, then bag-set maximization is solvable in polynomial time, whereas if $Q$ is a non-hierarchical \sjfbcq, then bag-set maximization is an $\np$-complete optimization problem, i.e., its underlying decision problem is $\np$-complete. The intractability of {\bagsetmax} for non-hierarchical queries is shown via a reduction from the {\sc Balanced Complete Bipartite Subgraph} problem \cite[Problem GT24]{DBLP:books/fm/GareyJ79} (also known as the {\sc Bipartite Clique} problem - see \cite{DBLP:journals/siamcomp/Khot06}), a problem that has also been investigated in the context of parameterized complexity by Lin \cite{DBLP:journals/jacm/Lin18}. Using the results in  \cite{DBLP:journals/jacm/Lin18}, we show that the natural parameterized versions of {\bagsetmax} are $\mathsf{W}[1]$-hard, and hence they are unlikely to be fixed-parameter tractable.


More significantly, on the tractability side, we show that for hierarchical \sjfbcqs, there is a single unifying polynomial-time\footnote{\rev{Polynomial in the input database size. Recall that we use data complexity throughout the paper.}} algorithm that solves {\bagsetmax}, {\sc Probabilistic Query Evaluation}, and {\sc Shapley Value Computation}. 
The algorithm operates on an abstract algebraic structure that we call a {\em \twomonoid}.
 A \twomonoid~resembles  a commutative semiring, except that it is {\em not} required to satisfy the distributive property. Specifically, a \twomonoid $\bm K = (K, \oplus, \otimes)$ consists of two commutative monoids $(K, \oplus)$
and $(K, \otimes)$ with neutral elements $\bm 0$ and  $\bm 1$ in $K$ for
$\oplus$ and  $\otimes$,  respectively,  
and it also satisfies the identity $\bm 0 \otimes \bm 0 = \bm 0$.
Each  of the three aforementioned problems requires a different instantiation of the \twomonoid~tailored for the problem at hand.

Interestingly, each instantiation of the \twomonoid that we consider for each of the three problems is {\em not} going to be a semiring; specifically, it will violate the distributive property.
This is expected since all the three problems are intractable for non-hierarchical
queries, including the standard non-hierarchical query $Q_{\mathrm{nh}}() \cd R(X)\land S(X,Y)\land T(Y)$,
which is acyclic.
If the distributive property were to hold, then our generic algorithm would have been able to solve all acyclic queries, including the non-hierarchical query $Q_{\mathrm{nh}}$ above, in polynomial time, which would contradict the known hardness results for these queries
in all three problems.
In some sense, the lack of the distributivity property in the three different \twomonoids considered here limits the power of our unifying algorithm to solving just the hierarchical queries, instead of the entire collection of all acyclic queries.

\paragraph*{\rev{Paper Organization}}
The rest of the paper is organized as follows:
\rev{Section~\ref{sec:overview} gives an example that illustrates the high-level idea of our unifying algorithm for hierarchical queries.}
Section~\ref{sec:prelims} contains  preliminaries.
Section~\ref{sec:bag-set-repair} contains the intractability results about the {\bagsetmax} problem for non-hierarchical queries.
\rev{Section~\ref{sec:algorithm} presents our unifying algorithm for hierarchical queries
and its instantiations for each of the three problems considered.}
Section~\ref{sec:correctness} contains the proof of correctness and the runtime analysis of this unifying algorithm. 
Finally, Section~\ref{sec:conclusion} contains concluding remarks.

\section{\rev{Example of Our Unifying Algorithm for Hierarchical Queries}}
\label{sec:overview}
\rev{In this section,} we illustrate our unifying algorithm by applying it 
to the query $Q$ from Eq.~\eqref{eq:example:query} (which is a hierarchical query) and focusing on the problems of {\sc Probabilistic Query Evaluation} and {\bagsetmax}. 
Depending on the problem at hand, we will introduce an appropriate \twomonoid $\bm K= (K, \oplus, \otimes)$
and convert an input for the problem into a {\em $\bm K$-annotated} database $\calD$, i.e., a database in which each fact $f$ is associated with a value from $K$, called the {\em annotation} of $f$.
Then, our unifying  algorithm will consist of  a sequence of operations over $\bm K$-annotated relations.

Suppose first  that 
we are trying to solve the 
{\sc Probabilistic Query Evaluation} problem for  $Q$.
In this case, the input to the problem is a tuple-independent probabilistic database $\calD$ where each fact $f$ is associated with its probability $p_f\in [0, 1]$.
We define a \twomonoid $\bm K=(K, \oplus, \otimes)$ where $K=[0,1]$ is the domain of probabilities,  and $\otimes$ and $\oplus$ are defined as follows for all $p_1, p_2 \in [0, 1]$:
\begin{align}
    p_1 \otimes p_2 &\defeq p_1 \times p_2\label{eq:prob-semiring:otimes}\\
    p_1 \oplus p_2 &\defeq 1-(1-p_1)\times(1-p_2)= p_1 + p_2 - p_1\times p_2\label{eq:prob-semiring:oplus}
\end{align}
Note that $p_1\otimes p_2$ is the probability of the {\em conjunction} of two independent events
with probabilities $p_1$ and $p_2$, whereas $p_1\oplus p_2$ is the probability of their {\em disjunction}.
Moreover, note that $\otimes$ does {\em not} distribute over $\oplus$ above, i.e., $p_1\otimes(p_2 \oplus p_3)$ does not equal $(p_1\otimes p_2)\oplus(p_1\otimes p_3)$ in general.
The input probabilistic database  can be viewed as a $\bm K$-annotated database instance where each fact $f$ is annotated with its probability that $f$ appears when an instance (a possible world) is sampled.
The output  is a value in $K$ representing the probability of $Q$ evaluating to true on a random instance.
Our unified algorithm computes this output using a sequence of operations over $\bm K$-annotated relations that we depict below. 
For this particular application, our unified algorithm specializes precisely
to the algorithm of Dalvi and Suciu \cite{DalviS07} for evaluating a \sjfbcq~on a tuple-independent probabilistic database.
Consider again the query in Eq.~\eqref{eq:example:query}. Below, we use capital letters $A, B$ to denote variables 
and small letters $a, b$ to denote values for the corresponding variables, where the values come from a domain $\dom$.
Given a tuple $(a, c, d) \in \dom^3$,
we use $T(a, c, d)$ to denote the {\em $\bm K$-annotation} of the tuple $(a, c, d)$ in the $\bm K$-annotated relation $T$, which, in this case, corresponds to the probability of the tuple $(a, c, d)$ appearing in the relation $T$:
\begin{align}
    T'(a, c) &\gets \bigoplus_{d \in \dom} T(a, c, d), &\forall (a, c) \in \dom^2\label{eq:example:T'}\\
    S'(a, c) &\gets S(a, c)\otimes T'(a, c), &\forall (a, c) \in \dom^2\label{eq:example:S'}\\
    S''(a) &\gets \bigoplus_{c \in \dom}S'(a, c), &\forall a \in \dom\label{eq:example:S''}\\
    R'(a) &\gets \bigoplus_{b \in \dom}R(a, b), &\forall a \in \dom\label{eq:example:R'}\\
    R''(a) &\gets R'(a)\otimes S''(a), &\forall a \in \dom\label{eq:example:R''}\\
    Q() &\gets \bigoplus_{a \in \dom}R''(a)\label{eq:example:Q}
\end{align}
Consider the first operation in Eq.~\eqref{eq:example:T'}.
Given a tuple $(a, c)\in\dom^2$, this rule computes $T'(a, c)$, which is the probability of a fact of the form $T(a,c,d)$ appearing in an instance for some value of $d$.
Note that the facts $T(a, c, d)$ for different $d$ values are independent events, hence the probability of their disjunction
can be correctly computed using the $\oplus$ operator from Eq.~\eqref{eq:prob-semiring:oplus}.
In contrast, the second operation in Eq.~\eqref{eq:example:S'} computes the probability of $(a, c)$
occurring in both $S$ and $T'$, which are two independent events.
Hence, the probability of their conjunction can be computed using the $\otimes$ operator from Eq.~\eqref{eq:prob-semiring:otimes}.
The last rule in Eq.~\eqref{eq:example:Q} computes the probability of $Q$ evaluating to true over possible databases.

Next,  suppose that 
we are trying to solve
the {\bagsetmax} problem for $Q$.
In this paper, we establish that we can still use the exact same algorithm above, depicted in Eq.~\eqref{eq:example:T'}-\eqref{eq:example:Q}.
The only thing that will change is how we define the \twomonoid $\bm K=(K, \oplus, \otimes)$
and how we construct a $\bm K$-annotated database to use as input to the above algorithm.
Consider an input instance $(\calD, \Dr, \theta)$ to 
the \bagsetmax problem.
For each fact $f$, we will need to store a (finitely representable) vector $\bm x$ of natural numbers, i.e., $\bm x \in \N^\N$, 
where the $i$-the entry of this vector represents the maximum multiplicity of the fact $f$ that can be achieved with a repaired budget of $i$, that is, by adding up to $i$ facts to the database $\calD$ we are trying to repair.
In particular, the domain $K$ of the \twomonoid will be the set of all vectors $\bm x \in \N^\N$.
Given two vectors $\bm x_1, \bm x_2 \in \N^\N$, we define the $\oplus$ and $\otimes$ operators as follows:
\begin{align}
    (\bm x_1 \oplus \bm x_2)(i) &\defeq \max_{i_1, i_2 \in \N: i_1 + i_2 = i} \bm x_1(i_1) + \bm x_2(i_2), &\forall i \in \N\label{eq:repair-semiring:oplus}\\
    (\bm x_1 \otimes \bm x_2)(i) &\defeq \max_{i_1, i_2 \in \N: i_1 + i_2 = i} \bm x_1(i_1) \times \bm x_2(i_2), &\forall i \in \N\label{eq:repair-semiring:otimes}
\end{align}

We will argue later that the algebraic structure  $\bm K=(K, \oplus, \otimes)$ defined above is indeed a \twomonoid (but {\em not} a semiring),
and it can be used to solve the
{\bagsetmax} problem 
using the algorithm from Eq.~\eqref{eq:example:T'}-\eqref{eq:example:Q} in polynomial time.
To that end, we also need to construct an input $\bm K$-annotated database instance, which is done as follows: (i)
every fact $f$ that is already present in the database instance $\calD$ that we want to repair
already has a multiplicity of 1 for free, hence its $\bm K$-annotation is the all-ones vector $\bm x = (1, 1, \ldots)$; (ii) in contrast, facts $f$ that are not in $\calD$ but in the repair database $\Dr$
have a multiplicity of 0 for free, but that multiplicity can be raised to one if we are willing to pay a repair cost of one or more. Hence, such facts are annotated with the vector $\bm x = (0, 1, 1, \ldots)$.

\rev{The intuition behind the $\oplus$ and $\otimes$ operators from Eq.~\eqref{eq:repair-semiring:oplus} and~\eqref{eq:repair-semiring:otimes} is as follows. Consider a disjunction of two formulas $f = f_1 \vee f_2$, and suppose that for each formula $f_j$, we already have a vector $\bm x_j$, where for any given repair budget $i_j$, the $i_j$-th entry of the vector $\bm x_j$ gives us the maximum multiplicity of $f_j$ that can be achieved within a repair budget of $i_j$. Moreover, suppose that the two formulas $f_1$ and $f_2$ are ``independent" in the sense that they don’t share any input facts, hence they have to be repaired independently. Now we ask: Given a repair budget of $i$, what is the maximum multiplicity of the disjunction $f = f_1 \vee f_2$ that we can achieve within this repair budget? To that end, we have to break down the repair budget $i$ into $i_1$ and $i_2$ such that $i = i_1 + i_2$ where we spend a budget of $i_1$ into repairing $f_1$ and $i_2$ into repairing $f_2$. The multiplicity of $f$ that we can achieve will be the sum of the two multiplicities we can achieve for $f_1$ and $f_2$, and we take the maximum over all possible ways to break $i$ down into $i_1$ and $i_2$. This gives us the formula from Eq.~\eqref{eq:repair-semiring:oplus}, which is basically the {\em convolution} of the two vectors $\bm x_1$ and $\bm x_2$ over the $(\N, \max, +)$ semiring.
A similar reasoning for conjunctions $f = f_1 \wedge f_2$ gives us Eq.~\eqref{eq:repair-semiring:otimes}, which is the convolution of $\bm x_1$ and $\bm x_2$ over the $(\N, \max, \times)$ semiring.}

Our correctness proofs and runtime analysis are not written for each of the three specific problems but rather in a generic way that utilizes the abstract algebraic properties of the \twomonoid.
This makes them applicable to all three problems and possibly more.
We describe the particular  \twomonoid that is used to solve the {\sc Shapley Value Computation} problem 
in Section~\ref{sec:algorithm}.

\section{Preliminaries}
\label{sec:prelims}

We use a capital letter $X$ to denote a variable and a small letter $x$
to denote a value for the corresponding variable $X$, where values come from  a countably infinite domain $\dom$.
We use boldface $\bm X$ to denote a set of variables and $\bm x$ to denote a tuple of values
for the corresponding set of variables $\bm X$.
In particular, a tuple $\bm x$ over a variable set $\bm X$ is a function $\bm x:\bm X\to \dom$.
Given two sets $S$ and $T$, we use $T^S$ to denote the set of all functions from $S$ to $T$.
As a special case, given a set of variables $\bm X$, we use $\dom^{\bm X}$ to denote the set of all tuples of values for the variable set $\bm X$.
Given a tuple $\bm x$ over a variable set $\bm X$ and a subset $\bm Y \subseteq \bm X$,
we use $\bm x(\bm Y)$ to denote the restriction of $\bm x$ to $\bm Y$, i.e., the tuple over $\bm Y$ obtained by
restricting the domain of $\bm x$ to $\bm Y$.
Given two tuples $\bm x$ and $\bm y$ over two disjoint variable sets $\bm X$ and $\bm Y$, we use $(\bm x, \bm y)$ to denote the tuple over $\bm X \cup \bm Y$ obtained by taking the union of the two tuples.
With some abuse of notation, we blur the line between a value $z$ of a variable $Z$
and a singleton tuple in $\dom^{\{Z\}}$.
For example, given a value $z$ of a variable $Z\notin \bm X$, we use $(\bm x, z)$ to denote the tuple over $\bm X \cup \{Z\}$ obtained by adding $z$ to $\bm x$.

An {\em atom} is an expression of the form $R(\bm X)$ where $R$ is a relation symbol and $\bm X$ is a set of variables.
A {\em schema} $\calS$ is a set of atoms.
A fact over $\calS$ is an expression of the form $R(\bm x)$ where $R(\bm X) \in \calS$
and $\bm x \in \dom^{\bm X}$.
A {\em database instance} $\calD^{\calS}$ over schema $\calS$ is a set of facts over $\calS$.
When the schema $\calS$ is clear from the context, we will simply write $\calD$ instead of $\calD^{\calS}$.
Given a schema $\calS = \{ R_1(\bm X_1),\hdots,R_m(\bm X_m) \}$, it will sometimes be convenient to write $\calD = \{R^\calD_1, \hdots, R^\calD_m\}$, where $R^\calD_i$ is the collection of $R_i$ facts in $\calD$, i.e.,~facts of the form $R_i(\bm x_i)$. We will typically suppress the superscript, writing $R_i$ instead of $R^\calD_i$ when no confusion results.
The {\em size} of a database instance $\calD$, denoted $\size{\calD}$, is the number of facts in $\calD$.

A {\em Boolean Conjunctive Query (\bcq)}, $Q$, has the following form, where $\atoms(Q)$ is the set of atoms in $Q$:
\begin{align}
    Q() \cd \bigwedge_{R(\bm X) \in \atoms(Q)} R(\bm X)
    \label{eq:sjfbcq}
\end{align}
We use $\vars(Q)$ to denote the set of all variables in $Q$, i.e., $\vars(Q) \defeq \bigcup_{R(\bm X) \in \atoms(Q)} \bm X$.
Given a variable $Y \in \vars(Q)$, we use $\atoms(Y)$ to denote the set of all atoms in $Q$ that contain $Y$, i.e. $\atoms(Y) \defeq \{R(\bm X) \in \atoms(Q) \mid Y \in \bm X\}$.
A \bcq $Q$ is said to be {\em self-join-free} if $\atoms(Q)$ have distinct relation symbols, i.e.,
for every pair of distinct atoms $R_1(\bm X), R_2(\bm X) \in \atoms(Q)$, we have $R_1 \neq R_2$.
Throughout the paper, we only consider {\em Self-Join-Free \bcqs}, abbreviated \sjfbcqs.
A database instance $\calD$ for $Q$ is a database instance over the schema $\atoms(Q)$.
Throughout the paper, we will refer to a database instance $\calD$ in the context of a specific query $Q$ to implicitly mean that $\calD$ is over the schema $\atoms(Q)$.


\section{The Bag-Set Maximization Problem}
\label{sec:bag-set-repair}

In this section, we define the \bagsetmax problem, and show that it is $\np$-complete for all non-hierarchical \sjfbcqs.
\begin{definition}[\bagsetmax]
    \label{defn:bag-set:max}
    The \bagsetmax problem is an optimization problem that is parameterized by a self-join-free Boolean conjunctive query $Q$, which is interpreted under \rev{{\em bag-set semantics}~\cite{10.5555/260873}}.
    The input to the problem is a tuple $(\calD, \Dr, \theta)$ where:
    \begin{itemize}
        \item $\calD$ is a set database instance, which is the input instance that we want to repair.
        \item $\Dr$ is another set database instance, called the {\em repair database}, which contains the facts that we are allowed to add to $\calD$ in order to repair $\calD$.
        \item $\theta$ is a natural number representing an upper bound on the number of facts
        we are allowed to add to $\calD$ in order to repair $\calD$.
    \end{itemize}
    The target is to find the maximum value of $Q(\calD')$ under \rev{bag-set semantics} over all set database instances $\calD'$ that are {\em valid repairs} of $\calD$, i.e., that satisfy:
    \begin{itemize}
        \item $\calD\subseteq\calD'\subseteq \calD \cup \Dr$ and $\card{\calD'\setminus\calD} \leq \theta$. In words, a valid repair $\calD'$
        results from $\calD$ by adding at most $\theta$ facts from $\Dr$.
        We call the quantity $\card{\calD'\setminus\calD}$ the {\em repair cost} and denote it
        $\cost(\calD, \calD')$.
    \end{itemize}
\end{definition}
The above optimization problem has a natural decision version, which we define below.
\begin{definition}[\bagsetmaxdecision]
    \label{defn:bag-set:max:decision}
    The \bagsetmaxdecision problem is the decision version of \bagsetmax.
    It is parameterized by a self-join-free Boolean conjunctive query $Q$, and has the same inputs as \bagsetmax and one additional input which is a natural number $\tau \in \mathbb{N}$.
    The target is to decide whether the maximum value of $Q(\calD')$ under \rev{bag-set semantics} over all valid repairs $\calD'$ of $\calD$ is at least $\tau$.
\end{definition}

The following proposition is straightforward.
\begin{proposition}
    For any \sjfbcq $Q$, the \bagsetmaxdecision problem is in $\np$.
    \label{prop:bag-set:max:in-np}
\end{proposition}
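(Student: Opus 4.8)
The plan is to exhibit a polynomial-size certificate together with a polynomial-time verification procedure. Given an input $(\calD, \Dr, \theta, \tau)$ to \bagsetmaxdecision, the natural certificate is the set $S \subseteq \Dr$ of facts to be added, with the associated candidate repair $\calD' \defeq \calD \cup S$. First I would observe that any valid repair $\calD'$ satisfies $\calD \subseteq \calD' \subseteq \calD \cup \Dr$, so it is completely determined by the set $S = \calD' \setminus \calD \subseteq \Dr$, and $\card{S} \le \theta \le \card{\Dr}$; hence $S$ can be written down using $O(\card{\Dr})$ facts, which is polynomial in the size of the input instance. The verifier, on input $(\calD, \Dr, \theta, \tau)$ and guessed $S$, would (i) check that $S \subseteq \Dr$ and $\card{S} \le \theta$; (ii) form $\calD' = \calD \cup S$; and (iii) compute $Q(\calD')$ under bag semantics and accept iff $Q(\calD') \ge \tau$.

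The only point requiring a (routine) argument is step (iii): that $Q(\calD')$ under bag semantics can be computed in polynomial time when $Q$ is fixed. Recall that $Q(\calD')$ under bag semantics is the number of assignments $h : \vars(Q) \to \adom(\calD')$ such that for every atom $R(\bm X) \in \atoms(Q)$ the fact $R(h(\bm X))$ belongs to $\calD'$. Since $Q$ is a fixed query, $\card{\vars(Q)}$ is a constant, so there are at most $\card{\adom(\calD')}^{\card{\vars(Q)}}$ candidate assignments, each checkable in constant time; equivalently, $Q(\calD')$ equals the size of the bag join $\bowtie_{R(\bm X) \in \atoms(Q)} R^{\calD'}$, which has at most $\card{\calD'}^{\card{\atoms(Q)}}$ tuples and is computable in polynomial time. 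In particular $Q(\calD') \le \card{\calD'}^{\card{\atoms(Q)}}$ is bounded polynomially in the input, so the comparison with $\tau$ is well-defined regardless of how $\tau$ is encoded.

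Putting these together, the nondeterministic machine that guesses $S$ and then runs the verifier above operates in polynomial time and has an accepting computation on $(\calD, \Dr, \theta, \tau)$ if and only if there is a valid repair $\calD'$ of $\calD$ with $Q(\calD') \ge \tau$, i.e., if and only if $(\calD, \Dr, \theta, \tau)$ is a yes-instance; hence \bagsetmaxdecision is in $\np$. There is no real obstacle here: the only subtlety is to keep in mind that $Q$ is fixed and not part of the input, which is precisely what makes both the enumeration of assignments and the evaluation of the bag join run in polynomial time.
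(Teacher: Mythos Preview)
Your proposal is correct and follows essentially the same approach as the paper: guess a valid repair $\calD'$ (equivalently, the set $S$ of added facts) and verify in polynomial time that $Q(\calD') \ge \tau$, using the fact that $Q$ is fixed so that both the certificate size and the evaluation time are polynomial in the input. Your write-up is simply a more detailed version of what the paper states in two sentences.
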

To prove the above proposition, we can guess a valid repair $\calD'$ of $\calD$ and verify
that $Q(\calD') \geq \tau$.
Both the size of $\calD'$ and the time needed to evaluate $Q(\calD')$ are polynomial
in the input size. (Recall that the query $Q$ is a fixed parameter to the problem, hence
its size is a constant, \rev{which means we are using data complexity.})

We now prove that for any {\em non-hierarchical} \sjfbcq $Q$, the \bagsetmaxdecision problem for $Q$ is $\np$-complete.
To prove $\np$-hardness, we rely on a reduction from the {\sc Balanced Complete Bipartite Subgraph} (\BCBS) problem, which is a well-known $\np$-complete problem, that is defined as follows:
Given a (undirected and self-loop-free) graph $G = (V,E)$ and a positive integer $k$, is there a complete bipartite subgraph of $G$ in which each of the two parts has size $k$? \BCBS is known to be $\np$-complete \cite[Problem GT24]{DBLP:books/fm/GareyJ79}.

\begin{theorem}
\label{BCBS-prop}
For any non-hierarchical \sjfbcq $Q$, the \bagsetmaxdecision problem for $Q$ is $\np$-complete.
\end{theorem}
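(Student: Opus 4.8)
By Proposition~\ref{prop:bag-set:max:in-np} the problem is already in $\np$, so the plan is to establish $\np$-hardness via a polynomial-time reduction from \BCBS. Since $Q$ is non-hierarchical, I would first fix two variables $X,Y\in\vars(Q)$ violating all three conditions of Definition~\ref{defn:hierarchical}; this produces an atom $R_0\in\atoms(X)\setminus\atoms(Y)$, an atom $T_0\in\atoms(Y)\setminus\atoms(X)$, and a nonempty set $\atoms(X)\cap\atoms(Y)$ of atoms containing both $X$ and $Y$, and by self-join-freeness all relation symbols involved are pairwise distinct. These three kinds of atoms will play the roles of $R(X)$, $T(Y)$ and $S(X,Y)$ in the canonical non-hierarchical query $Q_{\mathrm{nh}}()\cd R(X)\land S(X,Y)\land T(Y)$, and the idea is to use them to encode counting the edges between two candidate vertex sets of a bipartite subgraph.

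Given a \BCBS instance $(G,k)$ with $G=(V,E)$, I would build the \bagsetmaxdecision instance $(\calD,\Dr,\theta,\tau)$ as follows. Fix a fresh domain value $\star\notin V$ and adopt the convention that in every fact we construct, every position holding a variable $Z\notin\{X,Y\}$ gets the value $\star$ (so all variables other than $X,Y$ are ``frozen''). Then: (a) for each atom in $\atoms(X)\cap\atoms(Y)$, put into $\calD$, for every edge $\{u,v\}\in E$, the fact with $u$ in the $X$-positions and $v$ in the $Y$-positions, and also the fact with $u,v$ swapped (both orientations); (b) put \emph{no} $R_0$-facts or $T_0$-facts in $\calD$, and instead put into $\Dr$, for every $v\in V$, the $R_0$-fact with $v$ in the $X$-positions and the $T_0$-fact with $v$ in the $Y$-positions; (c) for every remaining atom in $\atoms(X)\setminus\atoms(Y)$ put into $\calD$ all facts with some $v\in V$ in the $X$-positions, symmetrically for the remaining atoms in $\atoms(Y)\setminus\atoms(X)$, and for every atom containing neither $X$ nor $Y$ put the single all-$\star$ fact into $\calD$. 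Finally set $\theta=2k$ and $\tau=k^2$. Since $Q$ is fixed and $|\calD|,|\Dr|=O(|V|+|E|)$, this is a polynomial-time reduction.

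For correctness I would prove the following chain. Because every fact has $\star$ in every position of every variable $Z\notin\{X,Y\}$, any homomorphism $h$ with $Q(\calD')\ge 1$ must send every such $Z$ to $\star$; hence $h$ is determined by the pair $(h(X),h(Y))$, and a routine check over all atoms of $Q$ shows that $h$ is a homomorphism into a valid repair $\calD'$ exactly when $h(X)\in\bar P$, $h(Y)\in\bar S$ and $\{h(X),h(Y)\}\in E$, where $\bar P$ (resp.\ $\bar S$) is the set of vertices whose $R_0$-fact (resp.\ $T_0$-fact) was added from $\Dr$. Thus valid repairs correspond to pairs $(\bar P,\bar S)$ of subsets of $V$ with $\cost(\calD,\calD')=|\bar P|+|\bar S|\le\theta=2k$, and for such a repair $Q(\calD')=\card{\{(u,v)\in\bar P\times\bar S:\{u,v\}\in E\}}$. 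This quantity is at most $|\bar P|\cdot|\bar S|\le\bigl((|\bar P|+|\bar S|)/2\bigr)^2\le k^2$, and it attains $k^2$ for some valid repair iff there exist $\bar P,\bar S$ with $|\bar P|=|\bar S|=k$ and $\bar P\times\bar S\subseteq E$; since $G$ has no self-loops, any such $\bar P,\bar S$ must be disjoint, so this is exactly the existence of a $K_{k,k}$ subgraph of $G$. Hence the optimum of \bagsetmax on $(\calD,\Dr,\theta)$ is $\ge\tau$ iff $(G,k)$ is a yes-instance of \BCBS, which together with Proposition~\ref{prop:bag-set:max:in-np} completes the proof.

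The step I expect to be the main obstacle is making the reduction work for an \emph{arbitrary} non-hierarchical $Q$ rather than just the bare pattern $Q_{\mathrm{nh}}$: one must accommodate the extra variables and the possibly several atoms containing $X$ only, $Y$ only, both, or neither, and argue carefully that (i) the $\star$-freezing really forces $h(Z)=\star$ for every $Z\notin\{X,Y\}$, (ii) every atom is automatically satisfied once $h(X),h(Y)$ are chosen to agree with one edge, and (iii) the only remaining degree of freedom in a repair is the choice of the two vertex sets $\bar P,\bar S$. The secondary subtlety — ruling out $Q(\calD')>k^2$ and deducing disjointness of the two parts — is dispatched by the arithmetic-mean/geometric-mean inequality together with the absence of self-loops, as sketched above; note also that using \emph{both} orientations of each edge in the $\atoms(X)\cap\atoms(Y)$ facts is what makes the edge count symmetric and hence equal to $k^2$ precisely on yes-instances.
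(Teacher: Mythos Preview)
Your proposal is correct and follows essentially the same approach as the paper: reduce from \BCBS by freezing all variables except the two witnessing non-hierarchicality to a constant, encode the edge relation into the atoms containing both, leave exactly one $X$-only atom and one $Y$-only atom empty in $\calD$ but repairable from $\Dr$, set $(\theta,\tau)=(2k,k^2)$, and use the AM--GM equality case together with the absence of self-loops to force a $K_{k,k}$. The only cosmetic differences are that the paper freezes to an arbitrary vertex $a\in V$ rather than a fresh $\star\notin V$, and populates the remaining atoms via a single uniform rule over the set $\Gamma$ of edge-compatible tuples rather than your explicit case split on whether each atom contains $X$, $Y$, both, or neither.
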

\begin{proof}
    Membership in $\np$ follows from Proposition~\ref{prop:bag-set:max:in-np}.
    We now prove $\np$-hardness.
Let $Q$ be an arbitrary non-hierarchical \sjfbcq. 
Since $Q$ is not hierarchical, 
$Q$ must have the form:
\[
Q() \cd R(A,\bm{X}), S(A,B,\bm{Y}), T(B,\bm{Z}), P_1(\bm{D}_1), \hdots, P_n(\bm{D}_n),
\]
where $A, B$ are distinct variables and $\bm{X}$, $\bm{Y}$, $\bm{Z}$, $\bm{D}_1, \hdots, \bm{D}_n$ are sets of variables, $A \notin \bm{Z}$ and $B\notin\bm{X}$, and $n\geq 0$. We give a polynomial-time reduction from \BCBS to \bagsetmaxdecision for $Q$. Let $(G, k)$ be an input instance to \BCBS where $G=(V,E)$ is an undirected graph without self-loops and $k$ is a positive integer.
We construct an input instance $(\calD, \Dr, \theta, \tau)$ for the \bagsetmaxdecision problem for $Q$ as follows:
Let $\dom$ be the set of vertices $V$, and fix an arbitrary vertex $a \in V$.
Let $\bm W \defeq \vars(Q)$ and $\Gamma$ be the set of all tuples $\bm w \in \dom^{\bm W}$ such that $\bm w(X) = a$ for all $X \in \bm W\setminus\{ A,B \}$.
The database instance $\calD = \{R^{\calD},S^{\calD},T^{\calD},P_1^\calD,\hdots,P_n^\calD\}$ is defined as follows:
\begin{align*}
    R^\calD &= T^\calD = \emptyset,\\
    S^\calD &= \{ S(\bm w(A),\bm w(B),\bm w(\bm{Y})) \mid \bm w \in \Gamma ~\text{and}~ \{\bm w(A),\bm w(B)\} \in E \},\\
    P_i^\calD &= \{ P_i(\bm w(\bm D_i)) \mid \bm w \in \Gamma ~\text{and}~ \{\bm w(A),\bm w(B)\} \in E  \}, \quad \forall i \in [n].
\end{align*}
In contrast, the repair database $\Dr=\{R^{\Dr},S^{\Dr},T^{\Dr},{P_1}^{\Dr},\hdots,{P_n}^{\Dr}\}$ is defined as follows:
\begin{align*}
    R^{\Dr}=\{ R(\bm w(A),\bm w(\bm{X})) \mid \bm w \in \Gamma \},
    \quad
    T^{\Dr}= \{ T(\bm w(B),\bm w(\bm{Z})) \mid \bm w \in \Gamma \},
    \quad
    S^{\Dr}= P_1^{\Dr} = \hdots = P_n^{\Dr} = \emptyset.
\end{align*}
In other words, we encode the edge relation into $S^\calD$, and we permit repairs $\calD'$ of $\calD$ to be obtained by adding $R$ or $T$ facts where $A$ and $B$ are assigned to arbitrary vertices in $V$, while all other variables must be assigned to the fixed vertex $a$.
Finally, we set $\theta = 2k$ and $\tau = k^2$, thus specifying the input instance
$(\calD, \Dr, \theta, \tau)$ for \bagsetmaxdecision.

We claim that the following two statements are equivalent:
\begin{enumerate}
\item $(G, k)$ is a ``yes'' instance of \BCBS, i.e., $G$ has a complete bipartite subgraph in which each of the two parts has size $k$.
\item There is a database $\calD'$ such that $\calD \subseteq \calD' \subseteq \calD \cup \Dr$, $\cost(\calD,\calD') \leq 2k$, and $Q(\calD') \geq k^2$.
\end{enumerate}

\paragraph*{Proving $(1) \Longrightarrow (2)$:}
If $G$ has a complete balanced bipartite subgraph with parts $U_1$ and $U_2$, each of size $k$, then let $\calD' = \{R^{\calD'},S^{\calD'},T^{\calD'},P_1^{\calD'},\hdots,P_n^{\calD'}\}$ be the set database in which $S^{\calD'} = S^{\calD}$, $P_i^{\calD'} = P_i^{\calD}$ for each $i \in [n]$, and
\begin{align*}
R^{\calD'} &= \{ R(\bm w(A),\bm w(\bm{X})) \mid \bm w \in \Gamma ~\text{and}~\bm w(A) \in U_1\}, \\
T^{\calD'} &= \{ T(\bm w(B),\bm w(\bm{Z})) \mid \bm w \in \Gamma ~\text{and }~ \bm w(B) \in U_2\}.
\end{align*}
Then, $\cost(\calD,\calD')= \card{U_1} + \card{U_2} = 2k$. Let $\Gamma'$ denote the set of tuples $\bm w \in \Gamma$ such that $\bm w(A)\in U_1$ and $\bm w(B)\in U_2$.
Note that $\Gamma'$ contains exactly $k^2$ tuples, each of which represents a satisfying assignment for $Q$ in $\calD'$. Hence $Q(\calD') \geq k^2$.

\paragraph*{Proving $(2) \Longrightarrow (1)$:}
Assume that $\calD'=\{R^{\calD'},S^{\calD'},T^{\calD'},P_1^{\calD'},\hdots,P_n^{\calD'}\}$ is a set database such that $\calD \subseteq \calD' \subseteq \calD \cup \Dr$, $\cost(\calD,\calD') \leq 2k$, and $Q(\calD') \geq k^2$. By the construction of $\calD$ and $\Dr$, we must have $S^{\calD'} = S^{\calD}$ and $P_i^{\calD'} = P_i^{\calD}$ for each $i \in[n]$. By the construction of $\calD$, every satisfying assignment to $Q$ in $\calD'$ corresponds to a tuple $\bm w \in \Gamma$. Furthermore, every tuple $\bm w \in \Gamma$ is uniquely determined by its $A$ and $B$ values. Therefore, $k^2 \leq Q(\calD') \leq \card{R^{\calD'}} \times \card{T^{\calD'}}$. Since $\cost(\calD,\calD') \leq 2k$, we also have that $\card{R^{\calD'}} + \card{T^{\calD'}} \leq 2k$. By a straightforward calculus argument, this implies that $\card{R^{\calD'}} = \card{T^{\calD'}} = k$. Now let
\begin{align*}
U_1 &= \{ \bm w(A) \in V \mid \bm w \in \Gamma ~\text{and}~ R(\bm w(A),\bm w(\bm{X})) \in R^{\calD'}  \}  \\
U_2 &= \{ \bm w(B) \in V \mid \bm w \in \Gamma ~\text{and}~ T(\bm w(B),\bm w(\bm{Z})) \in T^{\calD'}  \}.
\end{align*}
For $Q(\calD')$ to be at least $k^2$, for every $(v_1, v_2) \in U_1\times U_2$, we must have $S(v_1,v_2,\bm w(\bm{Y})) \in S^{\calD'}=S^{\calD}$, which implies that $\{v_1,v_2\} \in E$. Since $G$ does not contain self-loops, it follows that $v_1\neq v_2$ for all $(v_1, v_2) \in U_1\times U_2$. Hence, $U_1$ and $U_2$ form a complete bipartite subgraph of size $k\times k$ in $G$.
\end{proof}


We conclude this section with some immediate corollaries of Theorem~\ref{BCBS-prop} regarding the parameterized complexity and approximation hardness of \bagsetmax.

One natural parameterization of \bagsetmaxdecision treats both the repair cost $\theta$ and the target $\tau$ as parameters. An alternative parameterization is the \emph{standard parameterization} \cite{flum2006parameterized}, where only the target $\tau$ is a parameter. In either case, the polynomial-time reduction in Theorem \ref{BCBS-prop} can be extended to an $\fpt$-reduction from the parameterized \BCBS problem -- parameterized by the size $k$ of each part of the desired balanced complete bipartite subgraph -- by observing that $\theta$ and $\tau$ in the resulting \bagsetmaxdecision instance are computable functions of $k$. Hence by the $\wone$-hardness of \BCBS \cite{DBLP:journals/jacm/Lin18}, we obtain the following.

\begin{corollary}
For \rev{every} non-hierarchical \sjfbcq $Q$, the parameterized \bagsetmaxdecision problem for $Q$ (with parameters $\theta$ and $\tau$ or with just parameter $\tau$) is $\wone$-hard.
\end{corollary}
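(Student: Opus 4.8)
The plan is to revisit the polynomial-time reduction built in the proof of Theorem~\ref{BCBS-prop} and observe that it is already an $\fpt$-reduction from the parameterized \BCBS problem, where \BCBS is parameterized by the size $k$ of each side of the desired balanced biclique. Since that parameterized \BCBS problem is $\wone$-hard by Lin~\cite{DBLP:journals/jacm/Lin18}, it suffices to check that the map $(G,k)\mapsto(\calD,\Dr,\theta,\tau)$ from Theorem~\ref{BCBS-prop} meets the three requirements of an $\fpt$-reduction into whichever parameterization of \bagsetmaxdecision we have in mind.

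First I would recall those three requirements: (i) correctness, i.e.\ yes-instances map to yes-instances and no-instances to no-instances; (ii) the reduction runs in time $g(k)\cdot |G|^{O(1)}$ for some computable $g$; and (iii) the parameter of the output instance is bounded by a computable function of $k$. Requirement (i) is exactly the equivalence of statements (1) and (2) proved in Theorem~\ref{BCBS-prop}, so nothing new is needed there. Requirement (iii) is the key point, and it is immediate from the construction: we set $\theta = 2k$ and $\tau = k^2$, so both output parameters are computable functions of $k$. This handles the parameterization of \bagsetmaxdecision by the pair $(\theta,\tau)$, and since $\tau = k^2$ alone is also a computable function of $k$, the very same reduction witnesses $\wone$-hardness of the \emph{standard parameterization} \cite{flum2006parameterized} by $\tau$ only.

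The only point that requires any care is requirement (ii)'s running time: I would note that the construction of $\calD$ and $\Dr$ amounts to enumerating, over the domain $\dom = V$, the facts induced by each edge of $G$ together with the fixed assignment of the non-$\{A,B\}$ variables to the distinguished vertex $a$, plus all $R$- and $T$-facts in $\Dr$. Each such relation has arity equal to the number of variables in the corresponding atom of $Q$, which is a constant because $Q$ is fixed and not part of the input. Hence the whole instance has size polynomial in $|G|$ with an exponent depending only on $Q$, and is computable in time polynomial in $|G|$ — a fortiori in $\fpt$ time. Once this is observed, all three conditions hold for both parameterizations, and the $\wone$-hardness of \bagsetmaxdecision follows from the $\wone$-hardness of parameterized \BCBS by transitivity of $\fpt$-reductions. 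I do not expect a genuine obstacle here; the content of the corollary is simply the bookkeeping that the parameters produced by the Theorem~\ref{BCBS-prop} reduction are controlled by $k$.
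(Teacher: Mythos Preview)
Your proposal is correct and matches the paper's own argument: the paper simply observes that the reduction in Theorem~\ref{BCBS-prop} is already an $\fpt$-reduction because $\theta=2k$ and $\tau=k^2$ are computable functions of $k$, and then invokes the $\wone$-hardness of parameterized \BCBS from~\cite{DBLP:journals/jacm/Lin18}. Your write-up is more detailed in checking the running-time and correctness conditions explicitly, but the approach is the same.
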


Additionally, the \bagsetmax problem can be interpreted as an approximation problem by taking the set of feasible solutions for an instance $(\calD, \Dr, \theta)$ of the problem to be any repair $\calD'$ of repair cost at most $\theta$. The \emph{value} associated to this instance and a feasible repair $\calD'$ is $Q(\calD')$. While hardness-of-approximation results for \BCBS are known \cite{feige2004hardness}, it is not obvious how to extend the reduction of Theorem \ref{BCBS-prop} to an appropriate approximation-preserving reduction. However, it is known that, if an $\np$-optimization problem has a {\em fully polynomial-time approximation scheme} (\fptas), then its standard parameterization is in $\fpt$ \cite{cai1997fixed}. Hence, by the $\wone$-hardness of its standard parameterization, we obtain the following.

\begin{corollary}
For \rev{every} non-hierarchical \sjfbcq $Q$, the \bagsetmax problem for $Q$ does not have an \fptas unless $\fpt = \wone$.
\end{corollary}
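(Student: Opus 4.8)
The plan is to obtain this corollary as an immediate consequence of the preceding corollary --- the $\wone$-hardness of the standard parameterization of \bagsetmaxdecision --- together with the classical meta-theorem of Cai and Chen \cite{cai1997fixed}: if an $\np$-optimization problem admits an \fptas, then its standard parameterization lies in $\fpt$. Concretely, I would first pin down the relevant parameterized problem. Viewing \bagsetmax as an $\np$-maximization problem in the standard way --- an instance is a triple $(\calD,\Dr,\theta)$, the feasible solutions are the valid repairs $\calD'$ with $\cost(\calD,\calD')\le\theta$, and the value of such a solution is $Q(\calD')$ --- its \emph{standard parameterization} is precisely \bagsetmaxdecision with the target $\tau$ taken as the parameter, i.e.\ the problem of deciding whether the optimum value is at least $\tau$. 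By the previous corollary, this parameterized problem is $\wone$-hard for every non-hierarchical \sjfbcq~$Q$.

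Next I would invoke (or, for a self-contained argument, reprove) the meta-theorem in the form needed here. Suppose, toward a contradiction, that \bagsetmax for some fixed non-hierarchical $Q$ had an \fptas, i.e.\ an algorithm that on input $(\calD,\Dr,\theta)$ and a rational $\varepsilon>0$ runs in time polynomial in the input size and $1/\varepsilon$ and returns a valid repair $\calD'$ with $Q(\calD')\ge(1-\varepsilon)\cdot\mathrm{opt}$, where $\mathrm{opt}$ denotes the optimum value. Given a parameterized instance $(\calD,\Dr,\theta,\tau)$, run this \fptas with $\varepsilon:=1/(\tau+1)$; it produces a repair $\calD'$ with $Q(\calD')\ge\frac{\tau}{\tau+1}\cdot\mathrm{opt}$. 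Since $Q(\calD')$ and $\mathrm{opt}$ are nonnegative integers (recall that \bagsetmax is even a polynomially-bounded $\np$-optimization problem), a one-line calculation shows $Q(\calD')\ge\tau$ iff $\mathrm{opt}\ge\tau$: if $\mathrm{opt}\ge\tau$ then $Q(\calD')\ge\frac{\tau}{\tau+1}\cdot\tau=\tau-\frac{\tau}{\tau+1}>\tau-1$, hence $Q(\calD')\ge\tau$, while if $\mathrm{opt}<\tau$ then trivially $Q(\calD')\le\mathrm{opt}<\tau$. Thus the \fptas decides \bagsetmaxdecision, and its running time is polynomial in the input size and $\tau+1$, i.e.\ it is an $\fpt$ algorithm for the standard parameterization of \bagsetmax for $Q$.

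Combining the two steps gives the claim: if \bagsetmax for a non-hierarchical $Q$ admitted an \fptas, then its $\wone$-hard standard parameterization would lie in $\fpt$, whence $\wone\subseteq\fpt$; since $\fpt\subseteq\wone$ always holds, this yields $\fpt=\wone$. I do not expect a substantive obstacle here --- the real work is already done in Theorem~\ref{BCBS-prop} and in Lin's $\wone$-hardness of \BCBS~\cite{DBLP:journals/jacm/Lin18}. The only points requiring a little care are (i) checking that the approximation-problem formulation of \bagsetmax (feasible solutions, value function) fits the hypotheses of the Cai--Chen meta-theorem, and (ii) the choice of $\varepsilon$ together with the integrality/rounding argument above --- in particular making sure it behaves correctly in the ``easy'' regime where $\mathrm{opt}$ is much larger than $\tau$ and the answer is trivially ``yes''.
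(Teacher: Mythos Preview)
Your proposal is correct and follows essentially the same route as the paper: both derive the corollary by combining the $\wone$-hardness of the standard parameterization of \bagsetmaxdecision with the Cai--Chen meta-theorem that an \fptas implies $\fpt$ membership of the standard parameterization. The paper simply cites the meta-theorem, whereas you additionally spell out the $\varepsilon = 1/(\tau+1)$ rounding argument, which is fine and makes the corollary self-contained.
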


\section{The Unifying Algorithm for Hierarchical Queries}
\label{sec:algorithm}

In order to describe our general-purpose polynomial-time algorithm for hierarchical \sjfbcqs, we need to develop some technical tools.

\subsection{An Elimination Procedure for Hierarchical Queries}
\rev{As an alternative to the standard definition from the introduction,} hierarchical queries can be defined in terms of a specific ``elimination procedure'',
where a query $Q$ is hierarchical if and only if this elimination procedure eliminates all its variables, thus reducing it to a query of the form $Q() \cd R()$.
This elimination procedure will be at the heart of our polynomial-time algorithm
where the algorithm's steps mirror the elimination steps.
The elimination procedure is also similar in spirit to the {\em GYO elimination procedure} for acyclic queries~\cite{AbiteboulHV95},
\rev{and also runs in polynomial time in the size of the query, just like the GYO procedure does.}

\begin{proposition}[Elimination Procedure for Hierarchical Queries]
    \label{prop:hq-elimination}
    A \sjfbcq $Q$ is hierarchical if and only if the following elimination procedure transforms $Q$ into a query with a single nullary atom, i.e., $Q() \cd R()$. Repeatedly, apply either one of the following two rules, until neither rule applies:
    \begin{itemize}
        \item {\em (Rule 1)} Eliminate a ``private'' variable $Y$: Find a variable $Y \in \vars(Q)$ that occurs in only one atom $R(\bm X)$ in $Q$, and replace $R(\bm X)$ with $R'(\bm X \setminus \{Y\})$ where $R'$ is a new relation symbol.
        \item {\em (Rule 2)} Eliminate a ``duplicate atom'': Find two distinct atoms $R_1(\bm X)$ and $R_2(\bm X)$
        that share the same set of variables $\bm X$, and replace both atoms with a single atom $R'(\bm X)$ where $R'$ is a new relation symbol.\footnote{Note that we are {\em not} adding two copies of the new atom $R'(\bm X)$ but only a single copy.
        Hence, the resulting query is still self-join-free.}
    \end{itemize}
    Moreover, the above two rules maintain the hierarchical property.
    In particular, after each application of Rule 1 or Rule 2, the resulting query is hierarchical
    if and only if the original query was hierarchical.
\end{proposition}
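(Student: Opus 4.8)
The plan is to prove both directions of the ``if and only if'', together with the auxiliary claim that Rules 1 and 2 preserve the hierarchical property. I will actually prove the auxiliary claim first, since it is needed for both directions, and then argue termination and correctness of the elimination procedure.

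\medskip

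\noindent\textbf{Step 1: Rules 1 and 2 preserve the hierarchical property.} Let $Q$ be a \sjfbcq and let $Q'$ be the result of one rule application. The key observation is that for every variable $Z$ surviving in $Q'$, the set $\atoms_{Q'}(Z)$ is obtained from $\atoms_Q(Z)$ by a simple syntactic renaming: under Rule~1 (eliminating private $Y$ from atom $R(\bm X)$), each atom $R(\bm X)$ containing such a $Z$ becomes $R'(\bm X\setminus\{Y\})$, so $\atoms_{Q'}(Z)$ is just $\atoms_Q(Z)$ with $R$ relabeled to $R'$; under Rule~2 (merging $R_1(\bm X),R_2(\bm X)$ into $R'(\bm X)$), each of $R_1,R_2$ in $\atoms_Q(Z)$ is replaced by the single atom $R'$, i.e.\ $\atoms_{Q'}(Z) = (\atoms_Q(Z)\setminus\{R_1,R_2\})\cup\{R'\}$ whenever $Z\in\bm X$, and $\atoms_{Q'}(Z)=\atoms_Q(Z)$ otherwise. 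In both cases the relabeling is a bijection on the atom sets of $Q'$, and crucially it is the \emph{same} bijection for every surviving variable $Z$. Hence the three set-relations ($\subseteq$, $\supseteq$, disjointness) among $\{\atoms_{Q'}(X):X\in\vars(Q')\}$ hold exactly when the corresponding ones held in $Q$ (restricted to the surviving variables). Since Rule~1 only removes a variable and Rule~2 removes none, $Q'$ is hierarchical iff $Q$ is. One small point to check: after Rule~2 the query is still self-join-free, which the footnote already flags, and no new variables are introduced, so the condition quantifies over a subset of the original variable pairs.

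\medskip

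\noindent\textbf{Step 2: Termination.} Each application of Rule~1 strictly decreases $|\vars(Q)|$, and each application of Rule~2 strictly decreases $|\atoms(Q)|$ while not increasing $|\vars(Q)|$. So the lexicographic measure $(|\vars(Q)|,|\atoms(Q)|)$ strictly decreases, and the procedure halts after at most $|\vars(Q)|+|\atoms(Q)|$ steps on any run. (Confluence is not needed: I only need that \emph{some} terminal query is reached, and by Step~1 every terminal query is hierarchical iff $Q$ is.)

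\medskip

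\noindent\textbf{Step 3: If $Q$ is hierarchical, the procedure reaches $Q()\cd R()$.} By Step~1 every intermediate query stays hierarchical, so it suffices to show: if $Q$ is hierarchical and has at least one variable or at least two atoms, then some rule applies. Suppose no rule applies. Since Rule~2 does not apply, all atoms have pairwise distinct variable sets. Since Rule~1 does not apply, every variable occurs in at least two atoms; pick any variable $Y$, occurring in atoms $R_1(\bm X_1),R_2(\bm X_2)$ with $\bm X_1\neq\bm X_2$, so WLOG there is $Z\in\bm X_1\setminus\bm X_2$. Then $\atoms(Z)$ and $\atoms(Y)$ are not disjoint (both contain $R_1$), and neither contains the other ($R_2\in\atoms(Y)\setminus\atoms(Z)$, and... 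I need a witness in $\atoms(Z)\setminus\atoms(Y)$). The right way to finish: among the variables of $Q$, by hierarchality the relation ``$\atoms(X)\subseteq\atoms(Y)$'' makes the variables into a forest-like structure; take a variable $Y$ whose atom-set $\atoms(Y)$ is \emph{maximal} under inclusion and then, among variables with that same maximal atom-set, note that if $\atoms(Y)$ is a singleton $\{R\}$ then Rule~1 applies to $Y$, and if $|\atoms(Y)|\ge 2$, say $R_1,R_2\in\atoms(Y)$, then since $\atoms(Y)$ is $\subseteq$-maximal, the variable sets $\bm X_1$ of $R_1$ and $\bm X_2$ of $R_2$ satisfy: for every variable $Z\in\bm X_1\cup\bm X_2$ we have $\atoms(Z)\subseteq\atoms(Y)$ would be too strong; instead, hierarchality forces $\atoms(Z)\supseteq\atoms(Y)$ or $\atoms(Z)\subseteq\atoms(Y)$ or disjoint for each such $Z$ relative to $Y$, and maximality of $\atoms(Y)$ rules out the strictly-larger case, giving $\atoms(Z)\subseteq\atoms(Y)$ or disjoint; but $Z\in\bm X_i$ means $R_i\in\atoms(Z)\cap\atoms(Y)$, so disjointness is impossible, hence $\atoms(Z)\subseteq\atoms(Y)$ for all $Z\in\bm X_1\cup\bm X_2$ — and I then want to conclude $\bm X_1=\bm X_2$, contradicting that Rule~2 does not apply. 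To get $\bm X_1=\bm X_2$: if $Z\in\bm X_1$ then $\atoms(Z)\subseteq\atoms(Y)$ so in particular... this still needs an argument that $R_2\in\atoms(Z)$, i.e.\ $Z\in\bm X_2$. This does not follow yet, so the clean fix is to pick $Y$ minimizing $|\atoms(Y)|$ instead, or to induct on a suitably chosen variable; I will organize the argument by choosing $Y$ with $\atoms(Y)$ minimal under inclusion among all variables, show every other variable $Z$ with $\atoms(Z)\cap\atoms(Y)\neq\emptyset$ has $\atoms(Y)\subseteq\atoms(Z)$, and then all atoms in $\atoms(Y)$ share exactly the variable set $\{X:\atoms(Y)\subseteq\atoms(X)\}$ — so they are duplicates unless $\atoms(Y)$ is a singleton, in which case $Y$ is private. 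Either way a rule applies, contradiction. Hence the procedure can only terminate at a variable-free, single-atom query, which must be $Q()\cd R()$.

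\medskip

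\noindent\textbf{Step 4: If $Q$ is not hierarchical, the procedure does not reach $Q()\cd R()$.} This is immediate from Step~1: $Q()\cd R()$ is trivially hierarchical, so if the procedure reached it, $Q$ would have been hierarchical by the contrapositive of Step~1.

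\medskip

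\noindent\textbf{Main obstacle.} The routine parts are termination (Step~2) and the preservation lemma (Step~1), which is a mechanical check that the atom-set bijection is uniform across variables. The genuinely delicate part is Step~3 — showing that a hierarchical query with a variable or two atoms \emph{always} admits a rule — because the naive ``pick a maximal atom-set'' argument does not directly force the two witnessing atoms to be duplicates; the correct choice (minimal atom-set, then argue all its atoms have identical variable sets using hierarchality against every other variable) needs to be set up carefully. I expect that to be the heart of the proof.
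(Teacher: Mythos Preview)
Your proposal is correct and takes a genuinely different route from the paper's proof.

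\textbf{What the paper does.} The paper imports the well-known \emph{tree characterization} of hierarchical queries (a connected hierarchical \sjfbcq has a rooted tree on $\vars(Q)$ such that every atom's variable set is a root-to-node path). For the forward direction it picks a \emph{leaf} $Y$ of that tree: the atoms whose variable set equals the root-to-$Y$ path are either multiple (Rule~2) or a single one, in which case $Y$ is private (Rule~1). For the backward direction it argues directly on the three-atom witness $R(A,\ldots),S(A,B,\ldots),T(B,\ldots)$ of non-hierarchality and checks it survives either rule.

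\textbf{What you do differently.} You avoid the tree characterization entirely. Your Step~1 is a uniform ``atom-relabeling'' argument that proves preservation in both directions at once; the paper instead splits this into Claim~1 (hierarchical $\Rightarrow$ stays hierarchical) and Claim~2 (non-hierarchical $\Rightarrow$ stays non-hierarchical) using different machinery for each. Your Step~3 replaces the ``pick a leaf of the tree'' argument with a direct ``pick $Y$ with $\atoms(Y)$ inclusion-minimal'' argument, then uses hierarchality to force every atom in $\atoms(Y)$ to have the same variable set $\{X:\atoms(Y)\subseteq\atoms(X)\}$. This is a nice self-contained substitute for the leaf argument.

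\textbf{Trade-offs.} Your approach is more elementary and makes the proof independent of the cited trie characterization. The paper's approach is shorter and visually clearer once one has the tree in hand. Your Step~3 meanders through a false start with the maximal choice before landing on the minimal one; in the write-up, just present the minimal-$\atoms(Y)$ argument directly.

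\textbf{Two small things to patch when you write it up.} First, in Step~1 for Rule~1, the backward implication ($Q'$ hierarchical $\Rightarrow$ $Q$ hierarchical) needs one sentence: the deleted variable $Y$ has $|\atoms_Q(Y)|=1$, so $\atoms_Q(Y)$ is automatically $\subseteq$ or disjoint from every $\atoms_Q(Z)$. Second, in Step~3, the case $\vars(Q)=\emptyset$ with $\ge 2$ atoms should be dispatched explicitly (all atoms are nullary, hence duplicates, hence Rule~2 applies) before you pick a variable $Y$.
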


Note that the above elimination procedure is very similar in nature to the {GYO elimination procedure}~\cite{AbiteboulHV95},
which is typically used to detect if a query is {acyclic}.
The only difference is that in GYO, Rule 2 is more relaxed: We find two atoms $R_1(\bm X)$ and $R_2(\bm Y)$ where $\bm X \subseteq \bm Y$, and we replace them with a single atom $R'(\bm Y)$.
As a result, whenever the above hierarchical elimination procedure succeeds on a query,
the GYO elimination procedure is guaranteed to succeed on the same query (but the opposite is not necessarily true).
This in turn is consistent with the well-known fact that hierarchical queries are a (strict) subset of acyclic queries.
Moreover, just like GYO, there could be multiple ways to apply the elimination procedure to a given query,
but they are all guaranteed to lead to the same conclusion, regarding the hierarchical property of the query.
\rev{Before proving Proposition~\ref{prop:hq-elimination}, we give a few examples.}


\begin{example}
    \label{eq:example:elim:1}
    Consider the query $Q$ from Eq.~\eqref{eq:example:query}.
    Here is one way to apply the elimination procedure. It shows that the query is hierarchical. Any other way to apply the procedure will lead to the same conclusion.
    \begin{align*}
        Q() &\cd R(A, B) \wedge S(A, C) \wedge {\color{blue}T(A, C, D)} & \text{(Rule 1)}\\
        Q() &\cd R(A, B) \wedge {\color{blue}S(A, C) \wedge T'(A, C)} & \text{(Rule 2)}\\
        Q() &\cd R(A, B) \wedge {\color{blue}S'(A, C)} & \text{(Rule 1)}\\
        Q() &\cd {\color{blue}R(A, B)} \wedge S''(A) & \text{(Rule 1)}\\
        Q() &\cd {\color{blue}R'(A) \wedge S''(A)} & \text{(Rule 2)}\\
        Q() &\cd {\color{blue}R''(A)} & \text{(Rule 1)}\\
        Q() &\cd {\color{blue}R'''()} & \text{(Done!)}
    \end{align*}
    Moreover, each one of the above intermediate queries is also hierarchical.
\end{example}

\begin{example}
    Consider the following query:
    \begin{align*}
        Q() \cd R(A, B) \wedge S(B, C) \wedge T(C, D)
    \end{align*}
    The elimination procedure can proceed as follows:
    \begin{align*}
        Q() &\cd R(A, B) \wedge S(B, C) \wedge {\color{blue}T(C, D)}& \text{(Rule 1)}\\
        Q() &\cd {\color{blue}R(A, B)} \wedge S(B, C) \wedge T'(C)& \text{(Rule 1)}\\
        Q() &\cd R'(B) \wedge S(B, C) \wedge T'(C)& \text{(Stuck!)}\\
    \end{align*}
    But now we are left with a query where neither rule applies. This shows that the original query
    and all of the above intermediate queries are {\em not} hierarchical.
\end{example}

The following example shows that even if the input hierarchical query was disconnected, the elimination procedure still reduces it to a {\em single} nullary atom.
\begin{example}
    Consider the query:
    \begin{align*}
        Q() \cd R(A) \wedge S(B)
    \end{align*}
    The elimination procedure proceeds as follows:
    \begin{align*}
        Q() &\cd R(A) \wedge {\color{blue}S(B)}& \text{(Rule 1)}\\
        Q() &\cd {\color{blue}R(A)} \wedge S'()& \text{(Rule 1)}\\
        Q() &\cd {\color{blue}R'() \wedge S'()}& \text{(Rule 2)}\\
        Q() &\cd R''() & \text{(Done!)}\\
    \end{align*}
    The query is hierarchical.
\end{example}

\rev{In order to prove Proposition~\ref{prop:hq-elimination}, we need some additional definitions and results.}
Given a \sjfbcq $Q$, two atoms $R_1(\bm X_1), R_2(\bm X_2)\in\atoms(Q)$ are called {\em connected} if:
\begin{itemize}
    \item they share at least one variable, i.e., $\bm X_1 \cap \bm X_2 \neq \emptyset$, or
    \item they are both connected to a third atom $R_3(\bm X_3)$.
\end{itemize}
A \sjfbcq $Q$ is called {\em connected} if every pair of atoms in $Q$ are connected.
Given a \sjfbcq $Q$, the {\em connected components} of $Q$ are the (unique) connected \sjfbcqs $Q_1, Q_2, \ldots, Q_m$
that satisfy $Q = Q_1 \wedge Q_2 \wedge \ldots \wedge Q_m$ and $\vars(Q_i) \cap \vars(Q_j) = \emptyset$ for all $i \neq j$.
The following proposition is well-known~\cite{10.1145/2395116.2395119}.

\begin{proposition}[\cite{10.1145/2395116.2395119}]
    \label{prop:hierarchical:trie}
    A connected \sjfbcq $Q$ is hierarchical if and only if there exists a rooted tree $T$ whose set of nodes is exactly $\vars(Q)$ that satisfies the following: For every atom $R(\bm X)\in\atoms(Q)$, there exists a node $Y\in \vars(Q)$ where the set of variables along the path from $Y$ to the root of $T$ (including $Y$ and the root) is exactly $\bm X$.
\end{proposition}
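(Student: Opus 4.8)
The plan is to treat the hierarchical condition as exactly the statement that the family $\mathcal{F} = \{\atoms(X) : X \in \vars(Q)\}$ is \emph{laminar} (any two members are nested or disjoint), and to use $\mathcal{F}$ ordered by inclusion as the bridge between the combinatorial "nested atom-sets" picture and the desired rooted tree on variables. I would prove the two directions separately, disposing of the easy direction first and then constructing the tree explicitly for the hard direction.

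For the ($\Leftarrow$) direction I would assume such a tree $T$ exists and, for each atom $R(\bm X)$, fix the node $Y_R$ whose root-path is $\bm X$. Then $X \in \bm X$ iff $X$ is an ancestor-or-equal of $Y_R$, so $\atoms(X) = \{R : Y_R \text{ lies in the subtree of } X\}$. Given two variables $X, X'$: if one is an ancestor of the other the corresponding subtrees are nested, forcing $\atoms(X)$ and $\atoms(X')$ to be nested; if $X$ and $X'$ are incomparable in $T$ their subtrees are disjoint, and since any atom in $\atoms(X) \cap \atoms(X')$ would place $Y_R$ in both subtrees, we get $\atoms(X) \cap \atoms(X') = \emptyset$. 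This is precisely Definition~\ref{defn:hierarchical}.

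For the ($\Rightarrow$) direction I would build $T$ directly. First, the hierarchical property gives laminarity of $\mathcal{F}$ immediately. Next I would show that connectedness yields a \emph{unique} maximal member: for any single atom $R$, the atom-sets of its variables all contain $R$, hence pairwise intersect, hence form a chain; so every atom lies inside a unique maximal member of $\mathcal{F}$, and two atoms inside distinct (therefore disjoint) maximal members can share no variable, which would disconnect $Q$. Thus a connected $Q$ has exactly one maximal atom-set, necessarily $\atoms(Q)$, realized by some variable lying in every atom. I would then form the inclusion tree $\mathcal{T}$ of the \emph{distinct} atom-sets — its parent function is well-defined because, by laminarity, the atom-sets strictly containing a fixed $S$ form a chain — root $\mathcal{T}$ at $\atoms(Q)$, and expand each node $S$ into a vertical chain of the variables whose atom-set is exactly $S$, linking chains according to $\mathcal{T}$. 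The resulting $T$ has node set $\vars(Q)$. The crux is then to check that each atom $R$ is realized by a root-to-node path: the variables containing $R$ are exactly those whose atom-set contains $R$, and these atom-sets form a strictly decreasing chain $\atoms(Q) = S_1 \supseteq \cdots \supseteq S_t$ (with $S_t$ the smallest). The key observation is that this chain is \emph{contiguous} in $\mathcal{T}$: any atom-set $S$ with $S_{i+1} \subseteq S \subseteq S_i$ satisfies $R \in S_{i+1} \subseteq S$, so $R \in S$ and $S$ already appears in the chain. Hence in $T$ the variables containing $R$ are precisely the stacked chains $V_{S_1}, \dots, V_{S_t}$ along a single top-down branch, i.e. exactly the path from the root down to the bottom variable of $V_{S_t}$, as required.

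The main obstacle will be the bookkeeping in the forward construction when several variables share the same atom-set: the natural laminar object lives on atom-sets, whereas the tree must have the variables themselves as nodes, so equal-atom-set variables must be collapsed into chains and the path argument run at the level of these chains. The genuinely delicate point is the contiguity claim, but once the one-line implication ``$R \in S_{i+1}$ and $S_{i+1} \subseteq S$ force $R \in S$'' is isolated, the identification of atoms with root-paths becomes immediate, and the rest is routine verification.
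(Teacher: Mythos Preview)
The paper does not actually prove this proposition: it is quoted from prior work (reference~\cite{10.1145/2395116.2395119}) and used as a black box in the proof of Proposition~\ref{prop:hq-elimination}. So there is no ``paper's own proof'' to compare against.

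That said, your argument is correct and is the standard one. The backward direction is clean. In the forward direction, the three points that need care are exactly the ones you isolated: (i) connectedness forces a unique maximal element of the laminar family $\{\atoms(X)\}$, which then must equal $\atoms(Q)$ and supplies the root; (ii) variables with the same atom-set must be strung into an arbitrary chain so that the tree has node set $\vars(Q)$ rather than the quotient set of atom-sets; and (iii) for each atom $R$, the atom-sets containing $R$ form not just a chain under inclusion but a \emph{saturated} chain in the Hasse tree, which is the contiguity observation you single out. With these three points in place the identification of $\bm X_R$ with a root-path is immediate. One cosmetic remark: it is worth noting explicitly that each $V_S$ is nonempty (since $S=\atoms(X)$ for some $X$), so the ``bottom of $V_{S_t}$'' node you pick as $Y_R$ always exists.
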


\begin{proof}[Proof of Proposition~\ref{prop:hq-elimination}]
    To prove Proposition~\ref{prop:hq-elimination}, we will prove the following claims:
    \begin{claim}
        \label{clm:hq-elimination:1}
        Given a hierarchical \sjfbcq $Q$ that does not have the form $Q() \cd R()$, either Rule 1 or Rule 2 must be applicable to $Q$, and the resulting query must be hierarchical.
    \end{claim}
    \begin{claim}
        \label{clm:hq-elimination:2}
        Given a non-hierarchical \sjfbcq $Q$, if either Rule 1 or Rule 2 is applicable to $Q$, then the resulting query must be non-hierarchical.
    \end{claim}
    \begin{proof}[Proof of Claim~\ref{clm:hq-elimination:1}]
        Claim~\ref{clm:hq-elimination:1} follows from Proposition~\ref{prop:hierarchical:trie}.
        In particular, consider a hierarchical \sjfbcq $Q$ that does not have the form $Q() \cd R()$.
        If $\vars(Q) = \emptyset$, then all connected components of $Q$ must have the form $Q_i() \cd R_i()$,
        in which case we can apply Rule 2 on them.
        On the other hand, if $\vars(Q) \neq \emptyset$, let $Q_i$ be a connected component of $Q$ that
        has at least one variable.
        Let $T_i$ be the tree that satisfies Proposition~\ref{prop:hierarchical:trie}, let $Y \in \vars(Q)$ be any leaf in the tree, and let $\bm X$ be the set of variables from $Y$ to the root of $T_i$.
        We recognize two cases:
        \begin{itemize}
            \item If $Q_i$ contains more than one atom whose variables are $\bm X$, then we can apply Rule 2 on them.
            \item Otherwise, let $R(\bm X)$ be the unique atom in $Q_i$ with a variable set $\bm X$.
            In this case, $R(\bm X)$ must be the only atom in $Q_i$ that contains $Y$, in which case
            we can apply Rule 1.
        \end{itemize}
        In both cases above, the resulting component $Q_i$ still has a tree $T_i'$ satisfying Proposition~\ref{prop:hierarchical:trie}, hence the resulting $Q$ is still hierarchical. This proves Claim~\ref{clm:hq-elimination:1}.
    \end{proof}

    \begin{proof}[Proof of Claim~\ref{clm:hq-elimination:2}]
        In order to prove Claim~\ref{clm:hq-elimination:2}, consider a non-hierarchical \sjfbcq $Q$.
        \rev{By definition of hierarchical queries,} $Q$ must contain two distinct variables $A, B \in \vars(Q)$ and three distinct
        atoms $R(\bm X), S(\bm Y), T(\bm Z)\in\atoms(Q)$ such that:
        \begin{itemize}
            \item $A$ is in $\bm X$ and $\bm Y$ but not in $\bm Z$, and
            \item $B$ is in $\bm Y$ and $\bm Z$ but not in $\bm X$.
        \end{itemize}
        Suppose that Rule 1 is applicable to $Q$.
        Note that the private variable that we eliminate in Rule 1 cannot be either $A$ or $B$ because
        each one of them is shared by at least two atoms.
        Hence after applying Rule 1, we can still use the same two variables $A$ and $B$ along with the atoms $R(\bm X), S(\bm Y), T(\bm Z)$ (one of which might be the one containing the private variable) to show that the resulting query is still non-hierarchical.

        Now suppose that Rule 2 is applicable to $Q$.
        Note that the two duplicate atoms from Rule 2 that share the same set of variables cannot both belong
        to the set of atoms $\{R(\bm X), S(\bm Y), T(\bm Z)\}$ because none of these three atoms share the same set of variables.
        Hence, even after applying Rule 2, we will still be left with three distinct atoms with variable sets $\bm X, \bm Y, \bm Z$. We can still use these atoms along with the two variables $A$ and $B$ to show that the resulting query is still non-hierarchical. This proves Claim~\ref{clm:hq-elimination:2}.
    \end{proof}
    We now show how to use Claims~\ref{clm:hq-elimination:1} and \ref{clm:hq-elimination:2} to prove Proposition~\ref{prop:hq-elimination}.
    Note that applying Rule 1 reduces the number of variables in the query by one, while applying Rule 2 reduces the number of atoms by one.
    Therefore, the two rules can only be applied a finite number of times.
    Claim~\ref{clm:hq-elimination:1} guarantees that if the query is hierarchical,
    then the two rules can be repeatedly applied until the query is reduced to the form $Q() \cd R()$.
    On the other hand, Claim~\ref{clm:hq-elimination:2} guarantees that if the query is non-hierarchical, then the two rules are going to preserve it as non-hierarchical,
    hence it will never be reduced to the form $Q() \cd R()$, since the latter is a hierarchical query.
    This proves Proposition~\ref{prop:hq-elimination}.
\end{proof}

\subsection{The Underlying Algebraic Structure: \twomonoid}
Our algorithm operates on a general algebraic structure called {\em \twomonoid}, which we define below.
A specific instantiation of this structure is chosen depending on the specific application domain
that the algorithm is used for.

A {\em \twomonoid} is a structure $\bm K = (K, \oplus, \otimes)$ that satisfies the commutative semiring properties, with two exceptions:
(a) it doesn't necessarily satisfy the distributive property, and (b) the annihilation-by-zero property is replaced by the weaker property $\bm 0 \otimes \bm 0 = \bm 0$.
\begin{definition}[\twomonoid]
    \label{defn:twomonoid}
     A triple $\bm K = (K, \oplus, \otimes)$ is a {\em \twomonoid} iff it satisfies the following:
    \begin{itemize}
        \item $(K, \oplus)$ is a commutative monoid\footnote{\rev{A {\em monoid} $(K, \oplus)$ is a pair where $K$ is a set and $\oplus$ is an associative binary operator over the elements of $K$ that has an {\em identity element} $\bm 0$, i.e., an element $\bm 0\in K$ where $a \oplus \bm 0 = \bm 0\oplus a=a$, for all $a \in K$. A monoid $(K,\oplus)$ is {\em commutative} if $\oplus$ is commutative.}} with identity element $\bm 0$.
        \item $(K, \otimes)$ is a commutative monoid with identity element $\bm 1$.
        \item \rev{The identity element, $\bm 0$, of $(K, \oplus)$ satisfies} $\bm 0 \otimes \bm 0 = \bm 0$.
    \end{itemize}
\end{definition}
We illustrate the \twomonoid structure for the three problems we consider in Sections~\ref{subsec:alg:probabilistic},~\ref{subsec:alg:bagsetmax}, and~\ref{subsec:alg:shapley}.
\footnote{\rev{Note that if $\bm 0 = \bm 1$, then the \twomonoid becomes trivial.
Nevertheless, our claims and proofs still hold, which is why we don't explicitly exclude this case.}}

\subsection{The Algorithm}
We now present our general algorithm for hierarchical queries.
Depending on the application, the algorithm can be instantiated to compute different values
representing the solutions to different problems.
The algorithm is instantiated by providing a specific instantiation of the \twomonoid $\bm K = (K, \oplus, \otimes)$ as a parameter.
The algorithm is depicted in Algorithm~\ref{alg:hierarchical}.
It takes as input a hierarchical \sjfbcq $Q$ and a $\bm K$-annotated database instance $\calD$.
The algorithm mirrors the elimination procedure from Prop.~\ref{prop:hq-elimination}.
As an example, note that the specific algorithm from Eq.~\eqref{eq:example:T'}-\eqref{eq:example:Q} mirrors the elimination steps from Example~\ref{eq:example:elim:1}.
The algorithm repeatedly applies either Rule~1 or Rule~2 until the query is reduced to the form $Q() \cd R()$.
We use the $\oplus$-operator in Rule 1 and the $\otimes$-operator in Rule 2 to compute new $\bm K$-annotated relations that replace old ones.
When the elimination procedure stops, the remaining query is guaranteed to have the form $Q() \cd R()$, assuming it was hierarchical to begin with. We return the $\bm K$-annotation of the nullary tuple $()$ in $R$ as the output of the algorithm.

\begin{algorithm}[th!]
    \caption{General-Purpose Algorithm for Hierarchical Queries}
    \label{alg:hierarchical}
    \begin{algorithmic}[1]
        \Statex{\textbf{Parameters:}}
        \begin{itemize}
            \item A \twomonoid $\bm K = (K, \oplus, \otimes)$.
        \end{itemize}
        \Statex{\textbf{Input:}}
        \begin{itemize}
            \item A hierarchical \sjfbcq $Q$ (Eq.~\eqref{eq:sjfbcq}).
            \item A $\bm K$-annotated database instance $\calD$.
            \begin{itemize}
                \item For each $R(\bm X) \in \atoms(Q)$, the relation $R$ is $\bm K$-annotated in $\calD$.
            \end{itemize}
        \end{itemize}
        \Statex{\textbf{Output:}}
        \begin{itemize}
            \item A value $q \in K$. The semantics for $q$ depends on the instantiation of the algorithm.
        \end{itemize}
        \Statex \rule{\linewidth}{0.4pt}
        \While{$Q$ is not of the form $Q() \cd R()$}
            \Comment Apply Rule 1 or 2 from Prop.~\ref{prop:hq-elimination}
            \If{$\exists Y \in \vars(Q)$ where $Y$ occurs in only one atom $R(\bm X)$}\Comment{Rule 1} \label{alg:meta:rule1}
                \State Let $\bm X'\defeq \bm X \setminus \{Y\}$, hence $R(\bm X) = R(\bm X', Y)$
                \State Create a new atom $R'(\bm X')$ defined as $R'(\bm x') \defeq \bigoplus_{y \in \dom} R(\bm x', y)$, for all $\bm x' \in \dom^{\bm X'}$
                \State Replace $R(\bm X)$ with $R'(\bm X')$
                \label{alg:meta:rule1:replace}
            \ElsIf{$\exists$ two atoms $R_1(\bm X)$ and $R_2(\bm X)$ with the same set of variables $\bm X$}\Comment{Rule 2} \label{alg:meta:rule2}
                \State Create a new atom $R'(\bm X)$ defined as $R'(\bm x) \defeq R_1(\bm x) \otimes R_2(\bm x)$, for all $\bm x \in \dom^{\bm X}$
                \State Replace both $R_1(\bm X)$ and $R_2(\bm X)$ with a single atom $R'(\bm X)$
                \label{alg:meta:rule2:replace}
            \Else
                \State{Report $Q$ is not hierarchical!} \Comment{By Prop.~\ref{prop:hq-elimination}}
            \EndIf
        \EndWhile
        \Return $R()$, i.e., the $\bm K$-annotation of the tuple $()$ in $R$ \Comment{$Q$ now has the form $Q() \cd R()$}
    \end{algorithmic}
\end{algorithm}

\subsection{First Instantiation: {\sc Probabilistic Query Evaluation}}
\label{subsec:alg:probabilistic}
We now instantiate Algorithm~\ref{alg:hierarchical} to compute the probability of a query $Q$ evaluating to true over a tuple-independent probabilistic database $\calD$.
The \twomonoid we use is defined as follows:
\begin{definition}[\twomonoid for {\sc Probabilistic Query Evaluation}]
    \label{defn:prob-semiring}
    We define a specific \twomonoid $\bm K = (K, \oplus, \otimes)$ where
    the domain $K$ is the set of all probabilities, i.e. $K \defeq [0, 1]$, and the $\oplus$ and $\otimes$ are defined using Eq.~\eqref{eq:prob-semiring:oplus} and~\eqref{eq:prob-semiring:otimes} respectively, for every $p_1, p_2 \in [0, 1]$.
    The identities for $\oplus$ and $\otimes$ are $\bm 0 \defeq 0$ and $\bm 1 \defeq 1$ respectively.
\end{definition}

We prove the following theorem in the next section.
\begin{theorem}
    Let $\bm K = (K, \oplus, \otimes)$ be the \twomonoid from Definition~\ref{defn:prob-semiring}.
    Let $Q$ be a hierarchical \sjfbcq, and $\calD$ be a tuple-independent probabilistic database, viewed as 
    a $\bm K$-annotated database instance $\calD$, where each fact is annotated with its probability.
    Then using $Q$ and $\calD$ as inputs, Algorithm~\ref{alg:hierarchical} runs in time $O(\size{\calD})$
    and outputs the probability of $Q$ evaluating to true over all possible worlds.
    \label{thm:correct:probabistic}
\end{theorem}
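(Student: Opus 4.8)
The plan is to prove correctness and the runtime bound separately, and to do so by induction on the sequence of elimination steps taken by Algorithm~\ref{alg:hierarchical}, which mirror the rules of Proposition~\ref{prop:hq-elimination}. The key invariant I would maintain is a semantic one: at every stage of the algorithm, if the current query is $Q_0() \cd \bigwedge_{i} R_i(\bm X_i)$ with $\bm K$-annotated relations $R_i$, then for every atom $R_i$ and every tuple $\bm x_i \in \dom^{\bm X_i}$, the annotation $R_i(\bm x_i)$ equals the marginal probability that, in a random possible world, the subquery ``rooted'' at $R_i$ — i.e. the conjunction of all the original atoms that were folded into $R_i$ during elimination, with the variables of $\bm X_i$ fixed to $\bm x_i$ and all eliminated variables existentially quantified — evaluates to true. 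For the base case (no steps taken), this is exactly the hypothesis that each original fact $f$ is annotated with $p_f$, together with the convention that absent facts have annotation $\bm 0 = 0$; the subquery rooted at a leaf atom is just that single atom.

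For the inductive step I would analyze the two rules. \textbf{Rule 1 (private variable $Y$ in a unique atom $R(\bm X', Y)$):} the new annotation is $R'(\bm x') = \bigoplus_{y} R(\bm x', y)$. Because $R$ is a leaf of the elimination forest at this moment (it contains the private variable $Y$, so nothing has yet been merged ``above'' it through $Y$), the events ``the subquery rooted at $R$ holds with $Y = y$'' are, for distinct values of $y$, events about disjoint sets of facts of a single tuple-independent relation, hence independent; so the probability of the disjunction over $y$ is computed correctly by $\oplus$ as defined in Eq.~\eqref{eq:prob-semiring:oplus}. This is exactly the point where independence — and hence the hierarchical structure — is used, and it is the step I expect to require the most care: I must argue that the fact-sets underlying the $R(\bm x', y)$ for varying $y$ are pairwise disjoint and that they are disjoint from the fact-sets of all \emph{other} current atoms, so that the $\oplus$ and the subsequent $\otimes$ steps are combining genuinely independent events. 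The hierarchical property (via Proposition~\ref{prop:hq-elimination}, which guarantees the elimination terminates in $R()$ and preserves hierarchicality) is what ensures that the ``scopes'' of the atoms form a laminar family, so this disjointness/independence bookkeeping goes through. \textbf{Rule 2 (two atoms $R_1(\bm X), R_2(\bm X)$ on the same variables):} here the two subqueries rooted at $R_1$ and $R_2$ involve disjoint sets of original relation symbols (self-join-freeness), hence disjoint fact-sets, hence independent events; their conjunction's probability is $R_1(\bm x) \otimes R_2(\bm x) = R_1(\bm x) \times R_2(\bm x)$, matching Eq.~\eqref{eq:prob-semiring:otimes}. When the loop terminates with $Q() \cd R()$, the invariant says $R()$ is the probability that the full original query $Q$ holds, which is the claimed output.

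For the runtime, I would observe that each Rule~1 application on atom $R(\bm X', Y)$ costs $O(\size{R})$ arithmetic operations (each surviving fact of $R$ contributes to exactly one $\bigoplus$), each Rule~2 application costs $O(\min(\size{R_1}, \size{R_2}))$, the number of eliminations is bounded by a constant depending only on the fixed query $Q$, and the size of every intermediate relation is bounded by $\size{\calD}$ (projections and pointwise products do not increase the number of tuples); with a suitable sparse representation keeping only nonzero-annotated tuples, and noting that in the probabilistic instantiation the value $\bm 0 = 0$ is absorbing for $\otimes$ and neutral for $\oplus$, the total cost is $O(\size{\calD})$. I would remark that this recovers the Dalvi–Suciu algorithm and defer the fully generic correctness/runtime argument (covering all three \twomonoid instantiations) to Section~\ref{sec:algorithm}'s general treatment, of which this theorem is a special case.
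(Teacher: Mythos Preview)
Your proposal is correct and rests on the same key insight as the paper---that the hierarchical elimination produces subformulas over pairwise-disjoint sets of original facts, so independence holds at every $\oplus$ and $\otimes$ step---but you organize the argument differently. You run a direct semantic induction, maintaining the invariant that each current annotation equals the marginal probability of the subquery it encodes. The paper instead factors through an abstract ``provenance \twomonoid'' $\ov{\bm K}$ of labeled trees: it first proves (Lemma~\ref{lmm:decomposable}) that running Algorithm~\ref{alg:hierarchical} over $\ov{\bm K}$ always yields \emph{decomposable} trees with pairwise-disjoint supports, and then proves a transfer result (Theorem~\ref{thm:correct}) saying that for any map $\phi:\ov K\to K$ that respects $\ov\oplus$ and $\ov\otimes$ on decomposable trees with disjoint supports, the concrete run over $\bm K$ computes $\phi$ of the provenance run. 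The probabilistic instantiation is then the one-line check that $\phi(x)\defeq\Pr[F_x]$ has this property. The payoff of the extra abstraction is reuse: the disjointness bookkeeping you flag as ``the step I expect to require the most care'' is done once in Lemma~\ref{lmm:decomposable} and never repeated for \bagsetmax or {\sc Shapley Value Computation}. Your direct route is fine for this single theorem (and you yourself point to a general treatment); the paper simply commits to the general machinery up front and then specializes. One cosmetic slip: in your Rule~1 paragraph you write ``disjoint sets of facts of a single tuple-independent relation,'' but by the time Rule~1 fires the atom $R$ may already be the result of earlier merges and encode several original relations; you correct this two sentences later. The runtime argument you sketch is essentially the paper's Theorem~\ref{thm:runtime} specialized to constant-cost $\oplus,\otimes$.
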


\subsection{Second Instantiation: \bagsetmax}
\label{subsec:alg:bagsetmax}

In order to use Algorithm~\ref{alg:hierarchical} to solve \bagsetmax, we define a suitable \twomonoid.

\begin{definition}[\twomonoid for \bagsetmax]
    \label{defn:repair-semiring}
    We define a specific \twomonoid $\bm K = (K, \oplus, \otimes)$ as follows. The domain $K$ is the set of all vectors of natural numbers
    $\N^{\N}$ that are {\em monotonic}\footnote{Recall that we use the notation $T^S$ to denote the set of all functions from $S$ to $T$.
    Hence, $\N^\N$ is the set of all functions from $\N$ to $\N$, or equivalently vectors of natural numbers.}, i.e.,
    $K \defeq \left\{\bm x \in \N^{\N} \mid \forall i < j, \bm x(i) \leq \bm x(j)\right\}$.
    For every pair of vectors $\bm x_1, \bm x_2 \in K$,
    the $\oplus$ and $\otimes$ operators are defined 
    using Eq.~\eqref{eq:repair-semiring:oplus} and~\eqref{eq:repair-semiring:otimes} respectively.
    The identities $\bm 0$ and $\bm 1$ for $\oplus$ and $\otimes$ are the all-zeros and all-ones vectors respectively,
    i.e., $\bm 0(i) \defeq 0$ and $\bm 1(i) \defeq 1$ for all $i \in \N$.
\end{definition}

It is straightforward to verify that the structure $\bm K = (K, \oplus, \otimes)$ from Definition~\ref{defn:repair-semiring} satisfies all \twomonoid properties from Definition~\ref{defn:twomonoid}.

Given an input instance $(\calD, \Dr, \theta)$ to the \bagsetmax problem from Definition~\ref{defn:bag-set:max},
the following definition helps us construct a $\bm K$-annotated database instance $\psi(\calD, \Dr)$ that is used as input to Algorithm~\ref{alg:hierarchical}.
To that end, we define a vector $\bm \star \in K$ as follows. For any $i \in\N$:
\begin{align*}
    \bm \star(i) \defeq
    \begin{cases}
        0 & \text{if $i = 0$}\\
        1 & \text{otherwise}
    \end{cases}
\end{align*}
\begin{definition}[Input $\bm K$-annotated database instance for \bagsetmax]
    Let $\bm K = (K, \oplus, \otimes)$ be the \twomonoid from Definition~\ref{defn:repair-semiring}, $Q$ be a $\sjfbcq$, and $(\calD, \Dr, \theta)$ be an input instance to the \bagsetmax problem from Definition~\ref{defn:bag-set:max}.
    We define a function $\psi$ that maps a fact $R(\bm x)$ where $R(\bm X) \in \atoms(Q)$ and $\bm x \in \dom^{\bm X}$
    to a monotonic vector in $K$ as follows:
    \begin{align}
        \psi(R(\bm x)) \defeq
        \begin{cases}
            \bm 1 & \text{if $R(\bm x)\in \calD$}\\
            \bm \star & \text{if $R(\bm x)\notin\calD$ and $R(\bm x)\in\Dr$}\\
            \bm 0 & \text{otherwise}
        \end{cases}
    \end{align}
    Moreover, we define $\psi(\calD, \Dr)$ as a $\bm K$-annotated database instance where
    for every $R(\bm X) \in \atoms(Q)$ and $\bm x \in \dom^{\bm X}$, the fact $R(\bm x)$ is annotated with $\psi(R(\bm x))$.
    \label{defn:repair-annotated-database}
\end{definition}

We prove the following theorem in the next section.
\begin{theorem}
    Let $\bm K = (K, \oplus, \otimes)$ be the \twomonoid from Definition~\ref{defn:repair-semiring}, $Q$ be a hierarchical \sjfbcq, $(\calD, \Dr, \theta)$ be an input instance to \bagsetmax from Definition~\ref{defn:bag-set:max},
    and $\psi(\calD, \Dr)$ be the $\bm K$-annotated database instance from Definition~\ref{defn:repair-annotated-database}.
    Then using $Q$ and $\psi(\calD, \Dr)$ as inputs, Algorithm~\ref{alg:hierarchical}
    runs in time $O((\size{\calD} +\size{\Dr})\cdot \size{\Dr}^2)$
    \rev{and space $O((\size{\calD} +\size{\Dr})\cdot \size{\Dr})$}
    and
    outputs a vector $q \in K$ that satisfies the following:
    \begin{itemize}
        \item For every $i \in \N$, the $i$-the entry, $q(i)$, is the maximum multiplicity of $Q$ that can be achieved within a repair cost at most $i$.
    \end{itemize}
    \label{thm:correct:repair}
\end{theorem}
In particular, the output to the \bagsetmax problem is $q(\theta)$, i.e.,~the $\theta$-th entry of $q$.

\subsection{Third Instantiation: {\sc Shapley Value Computation}}
\label{subsec:alg:shapley}
We start with a brief overview of the concept of Shapley values of facts in a database,
and then explain how to instantiate Algorithm~\ref{alg:hierarchical} to compute these values.

\begin{definition}[Shapley value of a fact with respect to a query and a database]
    Let $Q$ be a Boolean Conjunctive Query and $\calD$ be a database instance.
    The instance $\calD$ is divided into two parts:
    \begin{itemize}
        \item {\em Exogenous part}, denoted $\Dexo$. These are facts that are fixed.
        \item {\em Endogenous part}, denoted $\Dendo$. These are facts that can be added or deleted from $\calD$.
    \end{itemize}
    Suppose that we pick a permutation of the facts in $\Dendo$ uniformly at random, and we begin to
    insert those facts one by one into $\Dexo$.
    The {\em Shapley value of a given fact $f$ with respect to $Q$ and $\calD$} 
    is the probability of adding the fact $f$ flipping the answer of $Q$ from $\false$ to $\true$.
    Formally, let $\pi(\Dendo)$ denote the set of all permutations of facts in $\Dendo$.
    Given a permutation $\sigma \in \pi(\Dendo)$ and a fact $f$ in $\Dendo$, let $\sigma_{<f}$ denote the
    set of facts that appear before $f$ in $\sigma$.
    Given a fact $f$ in $\Dendo$, the {\em Shapley value of $f$ with respect to $Q$ and $\calD$} is defined as:
    \begin{align}
        \shapley_{Q, \Dexo, \Dendo}(f) \defeq \frac{1}{|\Dendo|!}\sum_{\sigma \in \pi(\Dendo)}
            \left(Q(\Dexo \cup \sigma_{<f} \cup \{f\}) - Q(\Dexo \cup \sigma_{<f})\right)
    \end{align}
\end{definition}

Recent work~\cite{DBLP:journals/lmcs/LivshitsBKS21} has shown a reduction from computing $\shapley_{Q, \Dexo, \Dendo}(f)$
to computing the quantity $\sat_{Q, \Dexo, \Dendo}(k)$ that is defined below.

\begin{definition}[$\sat_{Q, \Dexo, \Dendo}(k)$]
    \label{defn:sharpsat}
    Let $Q$ be a Boolean Conjunctive Query, and $\calD$ be a database instance that is divided into
    an exogenous part $\Dexo$ and an endogenous part $\Dendo$. Given a number $k \in \N$,
    we define $\sat_{Q, \Dexo, \Dendo}(k)$ as the number of possible subsets $\calD' \subseteq \Dendo$
     of size $|\calD'| = k$ such that $Q(\Dexo \cup \calD')$ is true.
\end{definition}

\rev{We summarize the reduction by~\cite{DBLP:journals/lmcs/LivshitsBKS21} from $\shapley_{Q, \Dexo, \Dendo}(f)$
to $\sat_{Q, \Dexo, \Dendo}(k)$:}
\begin{align*}
    \shapley_{Q, \Dexo, \Dendo}(f)
    =&\sum_{\calD' \subseteq \Dendo \setminus \{f\}}
    \frac{|\calD'|! \cdot (|\Dendo| - |\calD'|-1)!}{|\Dendo|!}\cdot
    \left(Q(\Dexo \cup \calD' \cup \{f\}) - Q(\Dexo \cup \calD')\right)\\
    =&\sum_{\calD' \subseteq \Dendo \setminus \{f\}}\frac{|\calD'|! \cdot (|\Dendo| - |\calD'|-1)!}{|\Dendo|!}\cdot Q(\Dexo \cup \calD' \cup \{f\})\\
    -&\sum_{\calD' \subseteq \Dendo \setminus \{f\}}\frac{|\calD'|! \cdot (|\Dendo| - |\calD'|-1)!}{|\Dendo|!}\cdot Q(\Dexo \cup \calD')\\
    =&\sum_{k=0}^{|\Dendo|-1}\frac{k! \cdot (|\Dendo| - k - 1)!}{|\Dendo|!}\cdot
    \sat_{Q, \Dexo \cup\{f\}, \Dendo\setminus\{f\}}(k)\\
    -&\sum_{k=0}^{|\Dendo|-1}\frac{k! \cdot (|\Dendo| - k - 1)!}{|\Dendo|!}\cdot
    \sat_{Q, \Dexo, \Dendo\setminus \{f\}}(k)\\
    =&\sum_{k=0}^{|\Dendo|-1}\frac{k! \cdot (|\Dendo| - k - 1)!}{|\Dendo|!}\cdot
    \left(\sat_{Q, \Dexo\cup\{f\}, \Dendo\setminus \{f\}}(k)-\sat_{Q, \Dexo, \Dendo\setminus \{f\}}(k)\right)
\end{align*}
We now explain how to instantiate Algorithm~\ref{alg:hierarchical} to compute $\sat_{Q, \Dexo, \Dendo}(k)$.
Expectedly, the instantiation of our algorithm in this case is similar in spirit to the algorithm from~\cite{DBLP:journals/lmcs/LivshitsBKS21}:
Our Algorithm~\ref{alg:hierarchical} uses the \twomonoid abstraction
to {\em generalize}
the algorithm from~\cite{DBLP:journals/lmcs/LivshitsBKS21} to other application domains.
The specific \twomonoid we use in this case is defined as follows.

\begin{definition}[\twomonoid for computing $\sat_{Q, \Dexo, \Dendo}(k)$]
    We define a specific \twomonoid $\bm K = (K, \oplus, \otimes)$ as follows.
    Let $\B \defeq \{\true, \false\}$.
    The domain $K$ is $\N^{\N\times \B}$, i.e., the set of all vectors of natural numbers
    that are indexed by pairs of natural numbers and Boolean values.
    The $\oplus$ and $\otimes$ operations are defined as follows for any pair of vectors $\bm x_1, \bm x_2 \in K$:
    \begin{align}
        (\bm x_1 {\color{red}\oplus} \bm x_2)(i, b) &\defeq
        \sum_{i_1, i_2\in \N: i_1 + i_2 = i}\quad\sum_{b_1, b_2 \in \B: b_1 {\color{red}\vee} b_2 = b}
        \bm x_1(i_1, b_1) \times \bm x_2(i_2, b_2), &\forall i \in \N, b \in \B\label{eq:shapley:oplus}\\
        (\bm x_1 {\color{red}\otimes} \bm x_2)(i, b) &\defeq
        \sum_{i_1, i_2\in \N: i_1 + i_2 = i}\quad\sum_{b_1, b_2 \in \B: b_1 {\color{red}\wedge} b_2 = b}
        \bm x_1(i_1, b_1) \times \bm x_2(i_2, b_2), &\forall i \in \N, b \in \B\label{eq:shapley:otimes}
    \end{align}
    The identities $\bm 0$ and $\bm 1$ for $\oplus$ and $\otimes$ are as follows:
    \begin{align*}
        \bm 0(i, b) \defeq
        \begin{cases}
            1 &\text{if } i = 0 \text{ and } b = \false\\
            0 &\text{otherwise}
        \end{cases}
        \quad\quad\quad\quad\quad
        \bm 1(i, b) \defeq
        \begin{cases}
            1 &\text{if } i = 0 \text{ and } b = \true\\
            0 &\text{otherwise}
        \end{cases}
    \end{align*}
    \label{defn:shapley-semiring}
\end{definition}

Note that the above \twomonoid does {\em not} satisfy the annihilation-by-zero property:
$a \otimes \bm 0 = \bm 0$ for all $a \in K$.
It does however satisfy the weaker property $\bm 0 \otimes \bm 0 = \bm 0$ that is required
by Definition~\ref{defn:twomonoid}.

Given $Q$, $\Dexo$ and $\Dendo$, the following definition constructs a $\bm K$-annotated database instance $\psi(\Dexo, \Dendo)$ that is used as input to Algorithm~\ref{alg:hierarchical}.
In addition to the identity vectors $\bm 0$ and $\bm 1$, we also need to use the following vector $\star \in K$:
\begin{align*}
    \bm \star(i, b) &\defeq
    \begin{cases}
        1 &\text{if ($i = 0$ and $b = \false$) or ($i = 1$ and $b = \true$)}\\
        0 &\text{otherwise}
    \end{cases}
\end{align*}
\begin{definition}[Input $\bm K$-annotated database instance for computing $\sat_{Q, \Dexo, \Dendo}(k)$]
    Let $\bm K = (K, \oplus, \otimes)$ be the \twomonoid from Definition~\ref{defn:shapley-semiring},
    $Q$ be a \sjfbcq and $\Dexo$ and $\Dendo$ be two database instances for $Q$.
    We define a function $\psi$ that maps a fact $R(\bm x)$ where $R(\bm X)\in\atoms(Q)$
    and $\bm x \in \dom^{\bm X}$ to a vector in $K$ as follows:
    \begin{align}
        \psi(R(\bm x)) \defeq
        \begin{cases}
            \bm 1 & \text{if $R(\bm x) \in \Dexo$}\\
            \bm \star & \text{if $R(\bm x) \notin \Dexo$ and $R(\bm x) \in \Dendo$}\\
            \bm 0 & \text{otherwise}
        \end{cases}
    \end{align}
    Moreover, we define $\psi(\Dexo, \Dendo)$ as a $\bm K$-annotated database instance where for every $R(\bm X)\in\atoms(Q)$
    and $\bm x \in \dom^{\bm X}$, the fact $R(\bm x)$ is annotated with $\psi(R(\bm x))$.
    \label{defn:shapley-annotated-database}
\end{definition}

\rev{The following theorem will be  proved in the next section.}
\begin{theorem}
    Let $\bm K = (K, \oplus, \otimes)$ be the \twomonoid from Definition~\ref{defn:shapley-semiring},
    $Q$ be a hierarchical \sjfbcq,
    $\Dexo$ and $\Dendo$ be two database instances for $Q$,
    and $\psi(\Dexo, \Dendo)$ be the $\bm K$-annotated database instance from Definition~\ref{defn:shapley-annotated-database}.
    Then using $Q$ and $\psi(\Dexo, \Dendo)$ as inputs,
    Algorithm~\ref{alg:hierarchical} runs in time in $O((\size{\Dexo} + \size{\Dendo})\cdot \size{\Dendo}^2)$
    \rev{and space $O((\size{\Dexo} + \size{\Dendo})\cdot \size{\Dendo})$}
    and
    outputs a vector $q \in K$ that satisfies the following:
    \begin{align*}
        q(i, \true) = \sat_{Q, \Dexo, \Dendo}(i),& & \forall i \in \N
    \end{align*}
    \label{thm:correct:shapley}
\end{theorem}

\section{Correctness and Runtime Analysis of the Unifying Algorithm}
\label{sec:correctness}

\rev{In this section, we give a 
correctness proof and a runtime analysis of Algorithm~\ref{alg:hierarchical}, both of which apply
 regardless of the specific application and the specific \twomonoid that is being used.
We then show how to instantiate both the correctness proof and the runtime analysis
to each one of the three problems we consider in this paper.}

\subsection{General Correctness Proof}
\begin{definition}[Provenance Tree]
    \label{defn:provenance-tree}
    Let $\Sigma$ be a set of symbols.
    A {\em provenance tree} is a rooted tree whose leafs are labeled with symbols from $\Sigma \cup\{\true, \false\}$ and whose internal nodes are labeled with either $\wedge$ or $\vee$.
    A provenance tree is called {\em decomposable} if its leaves have distinct labels,
    \rev{including the labels $\true$ and $\false$.\footnote{\rev{The labels $\true$ and $\false$ do not need to appear in any trees except for the two trivial trees: the tree that consists of a single node which is $\true$, and the tree that consists of a single $\false$ node. This is because whenever $\true$ appears under a conjunction, it can be dropped, and whenever it appears under a disjunction, the entire disjunction becomes $\true$. Similar simplifications apply for $\false$, and these simplifications can always eliminate $\true$ and $\false$ from any node in a given tree, except for the root node.}}}
    Given a provenance tree $x$, the {\em support} of $x$, denoted $\supp(x)$, is the set of all symbols that appear in the leaves of $x$, excluding $\true$ and $\false$.
\end{definition}

\begin{definition}[Provenance \twomonoid]
    \label{defn:provenance-semiring}
    Let $\Sigma$ be a set of symbols.
    The {\em provenance \twomonoid} is a \twomonoid $\ov{\bm K} = (\ov{K}, \ov\oplus, \ov\otimes)$ defined as follows.
\begin{itemize}
    \item The domain $\ov{K}$ is the set of all provenance trees over $\Sigma\cup\{\true, \false\}$.
    \item The $\ov\oplus$ operator takes two provenance trees $x$ and $y$ and returns a new provenance tree $z$
    that has a root labeled with $\vee$ and whose children are the roots of $x$ and $y$.
    \item The $\ov\otimes$ operator takes two provenance trees $x$ and $y$ and returns a new provenance tree $z$
    that has a root labeled with $\wedge$ and whose children are the roots of $x$ and $y$.
\end{itemize}
In order to ensure commutativity of $\ov\oplus$ and $\ov\otimes$, we treat the children of any node
as an unordered set.
In order to ensure the associativity of $\ov\oplus$ and $\ov\otimes$, whenever we have
a node that has the same label as its parent, we merge them into a single node.
The identity for $\ov\oplus$ is the provenance tree with a single leaf labeled with $\false$,
and the identity for $\ov\otimes$ is the provenance tree with a single leaf labeled with $\true$.
\end{definition}

\begin{lemma}[Algorithm~\ref{alg:hierarchical} produces {\em decomposable} provenance trees]
    \label{lmm:decomposable}
    Given a hierarchical \sjfbcq $Q$ and a database instance $\ov\calD$,
    suppose that we annotate every fact in $\ov\calD$ with a {\em unique} symbol from a set of symbols $\Sigma$.
    Consider the provenance \twomonoid $\ov{\bm K} = (\ov{K}, \ov\oplus, \ov\otimes)$ from Definition~\ref{defn:provenance-semiring} where $\ov{K}$ is the set of provenance trees over $\Sigma\cup\{\true, \false\}$.
    The output of Algorithm~\ref{alg:hierarchical} on $Q$ and $\ov\calD$ is a provenance tree that is {\em decomposable} (Definition~\ref{defn:provenance-tree}).
    Moreover, every $\ov{\bm K}$-annotation that is computed throughout the algorithm is also a provenance tree that is {\em decomposable}.
\end{lemma}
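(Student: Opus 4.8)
The plan is to prove the statement by induction on the number of elimination steps that Algorithm~\ref{alg:hierarchical} performs, establishing a stronger invariant: after each step, every $\ov{\bm K}$-annotation of every tuple in every current relation is a decomposable provenance tree, and moreover the supports of these trees are pairwise \emph{compatible} in a way that guarantees decomposability is preserved under the next $\ov\oplus$ or $\ov\otimes$. The base case is immediate: before any step, each fact of $\ov\calD$ is annotated with a distinct single symbol from $\Sigma$, so each annotation is a trivial (single-leaf) provenance tree, which is vacuously decomposable, and distinct facts carry distinct labels.

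For the inductive step I would analyze the two rules separately. For Rule~1 (eliminating a private variable $Y$ occurring only in atom $R(\bm X', Y)$), the new annotation $R'(\bm x') = \bigoplus_{y} R(\bm x', y)$ is formed by $\ov\oplus$-combining the trees $R(\bm x', y)$ over all $y \in \dom$. The key point is that since $Y$ is private to $R$, the tuples $R(\bm x', y)$ for distinct $y$ are \emph{different facts} of the (current) database, so by the induction hypothesis their provenance trees have \emph{disjoint} supports; hence their $\ov\oplus$-combination has all-distinct leaf labels and is decomposable. Moreover, $\supp(R'(\bm x')) = \bigcup_y \supp(R(\bm x', y))$, and for distinct $\bm x'$ these unions are still disjoint (since the underlying $R$-tuples were disjoint), so the invariant is maintained. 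For Rule~2 (merging atoms $R_1(\bm X)$ and $R_2(\bm X)$ into $R'(\bm x) = R_1(\bm x) \otimes R_2(\bm x)$), I need that $\supp(R_1(\bm x))$ and $\supp(R_2(\bm x))$ are disjoint; this is where the self-join-freeness and the hierarchical structure matter — $R_1$ and $R_2$ are built from disjoint sets of original relation symbols, so every leaf symbol tracing back through $R_1$ comes from a base fact of a relation not appearing under $R_2$, giving disjoint supports and hence a decomposable $\ov\otimes$-combination. Again $\supp(R'(\bm x)) = \supp(R_1(\bm x)) \cup \supp(R_2(\bm x))$, and disjointness across distinct $\bm x$ follows because distinct tuples of $R_1$ (resp.\ $R_2$) already had disjoint supports.

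To make the support-disjointness claims precise, I would introduce (or invoke, if it can be extracted from the proof of Proposition~\ref{prop:hq-elimination}) a bookkeeping invariant: at every stage, each current atom $R(\bm X)$ is associated with a subset $A_R \subseteq \atoms(Q)$ of the \emph{original} atoms that were ``folded into'' it, these subsets partition $\atoms(Q)$, and $\supp(R(\bm x))$ is always contained in the set of symbols labeling original facts whose relation symbol lies in $A_R$ — with the further property that the map from a current tuple $\bm x$ to its support is injective on the relevant components. Tracking this through Rules~1 and~2 is routine (Rule~1 leaves $A_R$ unchanged; Rule~2 sets $A_{R'} = A_{R_1} \sqcup A_{R_2}$), and it yields both halves of the needed disjointness. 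Finally, the output is $R()$ at termination, which is one of these annotations and hence decomposable, and the ``moreover'' clause about all intermediate annotations is exactly the invariant.

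The main obstacle I anticipate is not any single calculation but pinning down the right invariant cleanly: specifically, arguing that within a single application of Rule~1 the family $\{R(\bm x', y)\}_{y \in \dom}$ has pairwise disjoint supports. This is intuitively clear because $Y$ is private to $R$ so these are genuinely distinct facts at the current stage, but one must be careful that ``distinct current facts have disjoint supports'' is itself part of the maintained invariant (it is not automatic from decomposability of a single tree — decomposability is about one tree, disjointness is a relation between trees). So the real content is bundling ``each annotation is decomposable'' together with ``distinct co-located annotations have disjoint supports'' and ``supports respect the evolving partition of $\atoms(Q)$'' into one inductive statement, after which both rules go through essentially mechanically.
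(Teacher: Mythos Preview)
Your approach is correct and essentially the same as the paper's: induction on elimination steps, maintaining the invariant that every current annotation is decomposable and that distinct current facts have pairwise disjoint supports. However, you are working harder than necessary. The paper's invariant is simply ``all current facts (across \emph{all} relations) have pairwise disjoint supports,'' and this global disjointness alone handles both rules. In particular, for Rule~2 you do not need the partition bookkeeping $A_R \subseteq \atoms(Q)$: since $R_1(\bm x)$ and $R_2(\bm x)$ are two distinct current facts (in distinct relations, by self-join-freeness), the global invariant already gives $\supp(R_1(\bm x)) \cap \supp(R_2(\bm x)) = \emptyset$. The key observation for preserving the invariant is just that in each rule, every old fact feeds into at most one new fact, so the new supports (being unions of disjoint old supports) remain pairwise disjoint. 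Your extra layer of tracking which original atoms fold into each current atom is valid but redundant.
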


\begin{proof}
    In order to prove the above lemma, we inductively prove that the algorithm maintains the following invariant. Recall the notion of {\em support} of a provenance tree from Definition~\ref{defn:provenance-tree}.
    \begin{claim}[Invariant of Algorithm~\ref{alg:hierarchical}]
        Every fact is annotated with a provenance tree that is decomposable.
        Moreover, all facts have provenance trees whose supports are pairwise disjoint.
    \end{claim}
    Initially, the above claim holds because every fact is annotated with a provenance tree consisting of a single leaf labeled with a unique symbol from $\Sigma$.
    We now prove that applying both Rule 1 and Rule 2 maintains the above invariant.

    For Rule 1 (line~\ref{alg:meta:rule1} of Algorithm~\ref{alg:hierarchical}), every fact $R'(\bm x')$ for $\bm x' \in \dom^{\bm X'}$ is annotated with a provenance tree which is the $\ov\oplus$ of provenance trees of all facts $R(\bm x', y)$ for $y \in \dom$.
    If all these provenance trees are decomposable and have disjoint supports, then the resulting
    provenance tree is also decomposable and has a support that is the union of the supports of all the children.
    Moreover, notice that the provenance tree of any fact $R(\bm x', y)$ is only used
    in the provenance tree of a single fact $R'(\bm x')$. Therefore, after we replace the atom $R(\bm X)$ with $R'(\bm X')$ (thus replacing all facts $R(\bm x', y)$ with new facts $R'(\bm x')$), the resulting facts will still have annotations whose supports are pairwise disjoint.

    For Rule 2 (line~\ref{alg:meta:rule2} of Algorithm~\ref{alg:hierarchical}), every fact $R'(\bm x)$ for $\bm x \in \dom^{\bm X}$ is annotated with a provenance tree which is the $\ov\otimes$ of provenance trees of the facts $R_1(\bm x)$ and $R_2(\bm x)$.
    If these two provenance trees are decomposable and have disjoint supports, then the resulting provenance tree is also decomposable and has a support that is the union of their supports.
    Moreover, every fact $R_1(\bm x)$ or $R_2(\bm x)$ is only used in the provenance tree of a single fact $R'(\bm x)$. Therefore, after we replace the two atoms $R_1(\bm X)$ and $R_2(\bm X)$ with $R'(\bm X)$ (thus replacing all facts $R_1(\bm x)$ and $R_2(\bm x)$ with new facts $R'(\bm x)$), the resulting facts will still have annotations whose supports are pairwise disjoint.
\end{proof}

The following theorem shows how to use the provenance \twomonoid from Definition~\ref{defn:provenance-semiring} as a {\em universal} \twomonoid to analyze the behavior of  Algorithm~\ref{alg:hierarchical} on any \twomonoid, and prove its correctness.

\begin{theorem}[Algorithm~\ref{alg:hierarchical} is correct]
    \label{thm:correct}
Let $Q$ be a hierarchical \sjfbcq and $\ov\calD$ be a database instance where every fact is annotated with a unique symbol from a set of symbols $\Sigma$.
Let $\ov{\bm K}= (\ov{K}, \ov{\oplus}, \ov{\otimes})$ be the provenance \twomonoid from Definition~\ref{defn:provenance-semiring} where $\ov{K}$ is the set of provenance trees over $\Sigma\cup\{\true, \false\}$,
and let $\ov{\bm 0}$ and $\ov{\bm 1}$ be the identities.
Let $\bm K = (K, \oplus, \otimes)$ be another \twomonoid with identities $\bm 0$ and $\bm 1$.
Suppose we are given a function $\phi: \ov{K} \to K$ that maps provenance trees to values in $K$ and satisfies the following properties, for all $x$ and $y$ in $\ov{K}$ that are decomposable and have disjoint supports:
\begin{align}
    \phi(x \ov{\oplus} y) &= \phi(x) \oplus \phi(y)\label{eq:correct:oplus}\\
    \phi(x \ov{\otimes} y) &= \phi(x) \otimes \phi(y)\label{eq:correct:otimes}\\
    \phi(\ov{\bm 0}) &= \bm 0
\end{align}
Let $\calD$ be the $\bm K$-annotated database instance obtained by replacing every $\ov{\bm K}$-annotation in $\ov{\calD}$ with its {\em $\phi$-mapping}, i.e. its corresponding value in $K$ using $\phi$.
Then, the output of Algorithm~\ref{alg:hierarchical} on $Q$ and $\calD$ is the $\phi$-mapping
of the output of Algorithm~\ref{alg:hierarchical} on $Q$ and $\ov\calD$.
\end{theorem}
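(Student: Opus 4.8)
The plan is to prove the theorem by induction on the number of elimination steps that Algorithm~\ref{alg:hierarchical} performs, tracking the two parallel runs of the algorithm — one on $\ov\calD$ over the provenance \twomonoid $\ov{\bm K}$, and one on $\calD$ over the target \twomonoid $\bm K$ — in lockstep. The key structural observation is that, because the algorithm is deterministic up to the (immaterial) choice of which rule to apply, and this choice depends only on the query $Q$ and not on the annotations, both runs perform exactly the same sequence of rule applications and produce relations over exactly the same sets of tuples; only the annotations differ. So I would set up the invariant: after each step, for every current atom $R(\bm X)$ and every tuple $\bm x \in \dom^{\bm X}$, the $\bm K$-annotation of $R(\bm x)$ in the $\calD$-run equals $\phi$ applied to the $\ov{\bm K}$-annotation of $R(\bm x)$ in the $\ov\calD$-run.

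The base case is immediate: the $\calD$-run starts with annotations obtained from the $\ov\calD$-run precisely by applying $\phi$, by the hypothesis of the theorem. For the inductive step, I would consider the two rules separately. For Rule~1, the new annotation of $R'(\bm x')$ in the provenance run is $\bigovmplus_{y\in\dom}$ of the provenance trees of the facts $R(\bm x',y)$; in the $\bm K$-run it is $\bigoplus_{y\in\dom}$ of their $\bm K$-annotations, which by the induction hypothesis are the $\phi$-images of the provenance trees. The point is then that $\phi$ commutes with the (finite) iterated $\ov\oplus$: this uses equation~\eqref{eq:correct:oplus} together with $\phi(\ov{\bm 0}) = \bm 0$ to handle the case where all but finitely many $R(\bm x',y)$ are absent (annotated $\ov{\bm 0}$), and crucially it uses Lemma~\ref{lmm:decomposable} — the provenance trees being $\ov\oplus$-combined are decomposable with pairwise disjoint supports, so each successive partial $\ov\oplus$ is again decomposable with disjoint support from the next summand, and~\eqref{eq:correct:oplus} applies at every stage of the fold. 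An easy sub-induction on the number of nonzero summands makes this precise. Rule~2 is analogous but simpler: $R'(\bm x)$ gets $R_1(\bm x)\ov\otimes R_2(\bm x)$, the two trees are decomposable with disjoint supports by Lemma~\ref{lmm:decomposable}, so~\eqref{eq:correct:otimes} gives $\phi(R_1(\bm x)\ov\otimes R_2(\bm x)) = \phi(R_1(\bm x))\otimes\phi(R_2(\bm x))$, which by the induction hypothesis is the $\bm K$-annotation computed in the $\calD$-run. In both cases the tuples that are left untouched keep annotations already related by $\phi$, so the invariant is restored. When the loop terminates the query has the form $Q()\cd R()$, and applying the invariant to the single nullary tuple $()$ in $R$ yields the claim.

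The main obstacle — and the reason the hypotheses of the theorem are stated only for decomposable trees with disjoint supports rather than unconditionally — is ensuring that~\eqref{eq:correct:oplus} and~\eqref{eq:correct:otimes} are actually invocable at every point where the fold in Rule~1 (or the product in Rule~2) is performed. This is exactly what Lemma~\ref{lmm:decomposable} buys us, so the proof is really a careful bookkeeping argument that threads the decomposability/disjoint-support invariant of that lemma through the homomorphism argument. A secondary technical point to dispatch is the infinite iterated $\ov\oplus$ over all $y\in\dom$: since only finitely many facts $R(\bm x',y)$ exist in $\ov\calD$ (and the algorithm only ever combines finitely many nonzero annotations, adding $\ov{\bm 0}$ for the rest), this is well defined, and $\phi(\ov{\bm 0})=\bm 0$ together with $\bm 0$ being the $\oplus$-identity makes the $\phi$-image agree with the finite $\bigoplus$ computed in the $\bm K$-run. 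No deeper difficulty is expected beyond making these two points rigorous.
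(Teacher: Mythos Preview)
Your proposal is correct and follows essentially the same approach as the paper: run the algorithm in parallel on $\ov\calD$ and $\calD$, maintain the invariant that every current annotation in the $\bm K$-run is the $\phi$-image of the corresponding annotation in the $\ov{\bm K}$-run, and use Lemma~\ref{lmm:decomposable} to guarantee that the decomposability/disjoint-support hypotheses of~\eqref{eq:correct:oplus} and~\eqref{eq:correct:otimes} are met at every step. Your treatment is in fact more careful than the paper's (which is quite terse), in that you explicitly address the finite-support fold in Rule~1 and the role of $\phi(\ov{\bm 0})=\bm 0$; the paper compresses all of this into a single sentence.
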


Note that we do {\em not} require Eq.~\eqref{eq:correct:oplus} and Eq.~\eqref{eq:correct:otimes} to hold for {\em all} provenance trees in $\ov{K}$, but only for those that are decomposable and have disjoint supports. This is going to be crucial because in all applications we consider,
Eq.~\eqref{eq:correct:oplus} and Eq.~\eqref{eq:correct:otimes} are {\em only} going to hold for decomposable provenance trees with disjoint supports.

\begin{proof}
To prove the above theorem, we consider two simultaneous runs of Algorithm~\ref{alg:hierarchical}.
The first run is on $Q$ and $\calD$, while the second is on $Q$ and $\ov\calD$.
We show that they maintain the invariant:
\begin{claim}
Every annotation of a fact that is produced by the first run is the $\phi$-mapping of the annotation of the same fact that is produced by the second run.
\end{claim}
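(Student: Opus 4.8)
The plan is to prove the \textbf{Claim} by induction on the sequence of rule applications performed by Algorithm~\ref{alg:hierarchical}, exploiting that both runs apply exactly the same rules in the same order (the rule choices depend only on $Q$, not on the annotations). For the base case, before any rule is applied, the annotation of each fact $R(\bm x)$ in the first run is, by the definition of $\calD$, precisely $\phi$ applied to the annotation of $R(\bm x)$ in the second run; so the invariant holds initially. For facts of the relations not yet touched, one also needs the convention that ``missing'' facts carry annotation $\bm 0$ in $\calD$ and $\ov{\bm 0}$ in $\ov\calD$, and $\phi(\ov{\bm 0}) = \bm 0$ by hypothesis; this keeps the invariant meaningful for the implicit all-of-$\dom^{\bm X}$ indexing used in the algorithm.

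For the inductive step, I would consider the two cases according to whether the common rule applied is Rule~1 or Rule~2. For Rule~1, the new fact $R'(\bm x')$ receives annotation $\bigoplus_{y\in\dom} R(\bm x', y)$ in each run, i.e.\ $\bigoplus_{y} a_y$ in the first run and $\ov\bigoplus_{y} \ov{a}_y$ in the second, where by the induction hypothesis $a_y = \phi(\ov a_y)$ for every $y$. I want to conclude $\phi(\ov\bigoplus_y \ov a_y) = \bigoplus_y \phi(\ov a_y)$. This follows from Eq.~\eqref{eq:correct:oplus} applied repeatedly, \emph{but only because} by Lemma~\ref{lmm:decomposable} the trees $\ov a_y$ are all decomposable with pairwise disjoint supports, and finitely many $\ov a_y$ are non-$\ov{\bm 0}$ (those with empty support), so the fold is a finite $\ov\oplus$-combination to which Eq.~\eqref{eq:correct:oplus} legitimately applies at each step; the $\ov{\bm 0}$ summands are absorbed using $\phi(\ov{\bm 0})=\bm 0$ and the monoid identity. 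For Rule~2 the argument is the analogous single-step application of Eq.~\eqref{eq:correct:otimes} to $R_1(\bm x)$ and $R_2(\bm x)$, again justified because these two trees are decomposable with disjoint supports by Lemma~\ref{lmm:decomposable}. In both cases, the annotations of untouched facts are unchanged and the invariant persists for them.

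Once the \textbf{Claim} is established, the theorem is immediate: when the elimination procedure halts (it does, since $Q$ is hierarchical, by Prop.~\ref{prop:hq-elimination}), the query has the form $Q()\cd R()$, and the output of each run is the annotation of the nullary tuple $()$ in $R$; by the invariant, the first run's output is the $\phi$-mapping of the second run's output, which is exactly what the theorem asserts.

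The main obstacle I anticipate is the careful bookkeeping around \emph{why Eq.~\eqref{eq:correct:oplus} and Eq.~\eqref{eq:correct:otimes} may be invoked at all}: these hypotheses are only assumed for decomposable trees with disjoint supports, so the proof cannot simply push $\phi$ through arbitrary algebraic expressions. The induction must be threaded together with the structural invariant of Lemma~\ref{lmm:decomposable} (decomposability and support-disjointness of all intermediate $\ov{\bm K}$-annotations) so that at every rule application the relevant operands genuinely satisfy the side conditions. A secondary subtlety is handling the implicit infinite indexing by $\dom$: only finitely many facts have nontrivial annotations at any stage, so the $\bigoplus_{y\in\dom}$ in Rule~1 is effectively a finite fold, and one should remark that $\phi(\ov{\bm 0})=\bm 0$ together with commutativity/associativity lets the infinitely many identity summands be dropped on both sides consistently.
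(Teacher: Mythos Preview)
Your proposal is correct and follows essentially the same approach as the paper: induction on the sequence of rule applications, with the base case given by the construction of $\calD$, and the inductive step justified by invoking Lemma~\ref{lmm:decomposable} to ensure the side conditions of Eq.~\eqref{eq:correct:oplus} and Eq.~\eqref{eq:correct:otimes} are met at every step. Your write-up is in fact more explicit than the paper's on a few points the paper leaves implicit---the handling of the infinite $\bigoplus_{y\in\dom}$ as a finite fold via $\phi(\ov{\bm 0})=\bm 0$, and the repeated (rather than single) application of Eq.~\eqref{eq:correct:oplus} in Rule~1---so there is no gap.
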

The above claim initially holds by the construction of $\calD$.
Thanks to Lemma~\ref{lmm:decomposable}, all provenance trees that are produced by the second run are decomposable.
Given two provenance trees $x$ and $y$, if $x \ov\oplus y$ (or $x \ov\otimes$ y) is decomposable, then $x$ and $y$ are also decomposable and have disjoint supports.
Hence, both Eq.~\eqref{eq:correct:oplus} and Eq.~\eqref{eq:correct:otimes} hold, thus they inductively prove the invariant.
\end{proof}


\subsection{General Runtime Analysis}

\begin{definition}[Size of a $\bm K$-annotated database]
Let $\bm K=(K, \oplus, \otimes)$ be a \twomonoid where $\oplus$ and $\otimes$ have identities $\bm 0$ and $\bm 1$ respectively. Given a $\bm K$-annotated relation $R(\bm X)$,
the {\em support} of $R(\bm X)$, denoted $\supp(R(\bm X))$, is the set of all facts $R(\bm x)$ in $R(\bm X)$ whose $\bm K$-annotation is not $\bm 0$.
The {\em size} of $R(\bm X)$ is the size of its support, i.e., $\card{\supp(R(\bm X))}$.
The support of a $\bm K$-annotated database instance $\calD$, denoted $\supp(\calD)$, is the union of the supports of its relations.
The {\em size} of a $\bm K$-annotated database instance $\calD$, denoted $\size{\calD}$, is the sum of the sizes of its relations.
\end{definition}

In order to analyze the runtime of Algorithm~\ref{alg:hierarchical}, we use the following lemma.

\begin{lemma}
Throughout Algorithm~\ref{alg:hierarchical}, the size of the $\bm K$-annotated database instance $\calD$ never increases.
\end{lemma}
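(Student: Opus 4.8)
The plan is to prove the lemma by induction on the number of rule applications performed by Algorithm~\ref{alg:hierarchical}, showing that a single application of Rule~1 or Rule~2 never increases the total support size $\size{\calD}$. Since the algorithm only ever applies these two rules and the initial $\size{\calD}$ is finite, this suffices.

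For Rule~1, suppose the private variable $Y$ is eliminated from $R(\bm X) = R(\bm X', Y)$, producing $R'(\bm X')$ with $R'(\bm x') \defeq \bigoplus_{y \in \dom} R(\bm x', y)$ (a sum in which only finitely many terms are non-$\bm 0$, since $\calD$ is finite). I would observe that if $R(\bm x', y) = \bm 0$ for every $y \in \dom$, then $R'(\bm x') = \bm 0$, because $\bm 0$ is the identity of the commutative monoid $(K, \oplus)$ and hence any $\oplus$-combination of copies of $\bm 0$ equals $\bm 0$. Contrapositively, each $R'(\bm x') \in \supp(R'(\bm X'))$ admits a witness $y_{\bm x'}$ with $R(\bm x', y_{\bm x'}) \in \supp(R(\bm X))$; the map $\bm x' \mapsto (\bm x', y_{\bm x'})$ is injective, so $\card{\supp(R'(\bm X'))} \le \card{\supp(R(\bm X))}$. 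All other relations of $\calD$ are untouched, so $\size{\calD}$ does not grow.

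For Rule~2, suppose the atoms $R_1(\bm X)$ and $R_2(\bm X)$ are replaced by $R'(\bm X)$ with $R'(\bm x) \defeq R_1(\bm x) \otimes R_2(\bm x)$. Here I would invoke exactly the weakened annihilation axiom $\bm 0 \otimes \bm 0 = \bm 0$ from Definition~\ref{defn:twomonoid}: if $R_1(\bm x) = \bm 0$ and $R_2(\bm x) = \bm 0$, then $R'(\bm x) = \bm 0$. Therefore $\supp(R'(\bm X)) \subseteq \supp(R_1(\bm X)) \cup \supp(R_2(\bm X))$, whence $\card{\supp(R'(\bm X))} \le \card{\supp(R_1(\bm X))} + \card{\supp(R_2(\bm X))}$; since no other relation changes, $\size{\calD}$ again does not grow.

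I do not expect a genuine obstacle. The only subtlety --- and the reason the lemma needs to be singled out --- is that Rule~2 relies precisely on $\bm 0 \otimes \bm 0 = \bm 0$ rather than on full annihilation-by-zero: without this axiom, an annotation obtained from two $\bm 0$ annotations could become non-$\bm 0$, and the support (hence the running-time bounds in the subsequent theorems) could blow up. Dually, Rule~1 relies only on $\bm 0$ being the $\oplus$-identity, which every \twomonoid satisfies.
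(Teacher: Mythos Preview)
Your proposal is correct and matches the paper's own proof essentially line for line: for Rule~1 you use that $\bm 0$ is the $\oplus$-identity to conclude $\card{\supp(R'(\bm X'))} \le \card{\supp(R(\bm X))}$, and for Rule~2 you use $\bm 0 \otimes \bm 0 = \bm 0$ to conclude $\supp(R'(\bm X)) \subseteq \supp(R_1(\bm X)) \cup \supp(R_2(\bm X))$. The only cosmetic difference is that the paper phrases the Rule~1 bound via a projection containment rather than your injective-witness map, but the content is identical.
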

\begin{proof}
    Algorithm~\ref{alg:hierarchical} updates the $\bm K$-annotated database instance $\calD$ in two places, namely lines~\ref{alg:meta:rule1:replace} and~\ref{alg:meta:rule2:replace}.
    We show below that neither update increases the size of $\calD$.
    \begin{itemize}
        \item In line~\ref{alg:meta:rule1:replace}, we replace the atom $R(\bm X', Y)$ with $R'(\bm X')$.
        We argue that the following holds: $\supp(R'(\bm X'))$ $\subseteq$ $\pi_{\bm X'}\supp(R(\bm X', Y))$.
        This is because for any given $\bm x' \in \dom^{\bm X'}$, we have
        $R'(\bm x') \defeq \bigoplus_{y \in \dom} R(\bm x', y)$
        and $\bm 0$ is the identity for $\oplus$.
        \item In line~\ref{alg:meta:rule2:replace}, we replace the two atoms $R_1(\bm X)$ and $R_2(\bm X)$ with a single atom $R'(\bm X)$.
        We argue that $\supp(R'(\bm X))\subseteq \supp(R_1(\bm X)) \cup \supp(R_2(\bm X))$,
        which implies $\card{\supp(R'(\bm X))} \leq \card{\supp(R_1(\bm X))}+\card{\supp(R_2(\bm X))}$.
        This holds because for any given $\bm x \in \dom^{\bm X}$, we have
        $R'(\bm x) = R_1(\bm x) \otimes R_2(\bm X)$, and a \twomonoid must satisfy $\bm 0 \otimes \bm 0 = \bm 0$ (Definition~\ref{defn:twomonoid}).
    \end{itemize}
\end{proof}

The following theorem is immediate.
\begin{theorem}[Algorithm~\ref{alg:hierarchical} used a linear number of $\oplus$ and $\otimes$ operations]
    \label{thm:runtime}
    Let $\bm K = (K, \oplus, \otimes)$ be a \twomonoid, $Q$ be a \sjfbcq, and $\calD$
    be a $\bm K$-annotated database instance of size $\size{\calD}$.
    On inputs $Q$ and $\calD$,
    the total number of $\oplus$ and $\otimes$ operations performed by Algorithm~\ref{alg:hierarchical} is $O(\size{\calD})$.
\end{theorem}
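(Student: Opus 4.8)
The plan is to combine two observations. First, the \textbf{while}-loop of Algorithm~\ref{alg:hierarchical} executes only a constant number of iterations (rounds): each application of Rule~1 eliminates one variable and each application of Rule~2 merges two atoms into one, so $|\vars(Q)| + |\atoms(Q)|$ strictly decreases with every round; since this quantity is always at least $1$, the loop halts after at most $|\vars(Q)| + |\atoms(Q)|$ rounds, a number depending only on the fixed query $Q$. Second, by the preceding lemma the size of the $\bm K$-annotated database never increases, so at the start of every round its size is at most $\size{\calD}$.

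Next I would bound the number of $\oplus$ and $\otimes$ operations performed in a single round. For a round applying Rule~1, which eliminates a private variable $Y$ from an atom $R(\bm X', Y)$: for each $\bm x' \in \dom^{\bm X'}$ the new annotation is $R'(\bm x') = \bigoplus_{y \in \dom} R(\bm x', y)$, and since $\bm 0$ is the $\oplus$-identity the summands annotated $\bm 0$ contribute nothing and can be skipped, so this costs at most $|\{\, y : R(\bm x', y) \in \supp(R(\bm X', Y)) \,\}|$ binary $\oplus$ operations; summing over all relevant $\bm x'$ gives at most $|\supp(R(\bm X', Y))| \le \size{\calD}$ for the round. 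For a round applying Rule~2, which merges $R_1(\bm X)$ and $R_2(\bm X)$ into $R'(\bm X)$: any tuple $\bm x$ with $R_1(\bm x) = R_2(\bm x) = \bm 0$ yields $R'(\bm x) = \bm 0 \otimes \bm 0 = \bm 0$ by the \twomonoid axiom and needs no operation, so only tuples in $\supp(R_1(\bm X)) \cup \supp(R_2(\bm X))$ require an actual $\otimes$, giving at most $|\supp(R_1(\bm X))| + |\supp(R_2(\bm X))| \le \size{\calD}$ many $\otimes$ operations for the round.

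Putting these together yields a total of at most $(\text{number of rounds}) \cdot \size{\calD} \le (|\vars(Q)| + |\atoms(Q)|) \cdot \size{\calD} = O(\size{\calD})$, since $Q$ is a fixed parameter of the problem. (If $Q$ is not hierarchical, the loop reaches the reporting branch without performing any $\oplus$ or $\otimes$ operation there, so the bound holds unconditionally.)

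I do not anticipate a genuine difficulty here; the only point needing care is that, although Algorithm~\ref{alg:hierarchical} is written as ranging over the infinite domain $\dom$, it only ever needs to touch the current support of a relation — this is precisely what makes a round cost $O(\size{\calD})$ rather than being unbounded, and it relies on $\bm 0$ being the $\oplus$-identity together with the $\bm 0 \otimes \bm 0 = \bm 0$ axiom, with the lemma keeping the relevant support uniformly bounded by $\size{\calD}$.
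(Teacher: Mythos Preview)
Your proposal is correct and is precisely the argument the paper has in mind: the paper simply declares the theorem ``immediate'' from the preceding lemma that the database size never increases, and you have spelled out exactly that reasoning (a constant number of rounds in $|\vars(Q)|+|\atoms(Q)|$, each costing at most $\size{\calD}$ operations thanks to the $\bm 0$-identity and the $\bm 0\otimes\bm 0=\bm 0$ axiom).
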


\rev{We are now ready to instantiate the correctness proof and the runtime analysis to each  of the three problems considered in this paper.
In particular, we will show how to apply Theorem~\ref{thm:correct} and Theorem~\ref{thm:runtime} to prove Theorems~\ref{thm:correct:probabistic},~\ref{thm:correct:repair}, and~\ref{thm:correct:shapley}.}
\subsection{First Instantiation: {\sc Probabilistic Query Evaluation}}

\begin{proof}[Proof of Theorem~\ref{thm:correct:probabistic}]
    First, we prove correctness using Theorem~\ref{thm:correct}.
    We define the function $\phi:\ov{K}\to K$ from Theorem~\ref{thm:correct} as follows: Given a provenance tree $x$, $\phi(x)$ is the probability of the Boolean formula 
    $F_x$ corresponding to the provenance tree $x$ evaluating to true.
    
    Given two decomposable provenance trees $x$ and $y$ with disjoint supports, consider the Boolean formulas $F_x$ and $F_y$ corresponding to $x$ and $y$ respectively.
    Because the supports are disjoint, the two formulas evaluating to true are independent events.
    Let $p_1$ and $p_2$ be the probabilities of $F_x$ and $F_y$, respectively, evaluating to true.
    By independence, the probability of $F_x \vee F_y$ is given by $p_1 \oplus p_2$ from Eq.~\eqref{eq:prob-semiring:oplus}, while the probability of $F_x \wedge F_y$ is given by $p_1 \otimes p_2$ from Eq.~\eqref{eq:prob-semiring:otimes}.
    Therefore, our definition of $\phi$ satisfies Eq.~\eqref{eq:correct:oplus} and Eq.~\eqref{eq:correct:otimes},
    and Theorem~\ref{thm:correct} applies.
    By the theorem, the output of Algorithm~\ref{alg:hierarchical} on $Q$ and $\calD$ is the probability of the Boolean formula corresponding to the provenance tree of $Q$ evaluating to true, as desired.

    Finally, the linear runtime follows from Theorem~\ref{thm:runtime} and the fact that the $\oplus$ and $\otimes$ operators from Eq.~\eqref{eq:prob-semiring:oplus} and Eq.~\eqref{eq:prob-semiring:otimes} take constant time.
\end{proof}

\subsection{Second Instantiation: \bagsetmax}

\begin{proof}[Proof of Theorem~\ref{thm:correct:repair}]
    First, we prove correctness using Theorem~\ref{thm:correct}.
    We define the function $\phi$ from Theorem~\ref{thm:correct} as follows:
    Given a provenance tree $x$, let $F_x$ be the corresponding Boolean formula. We define $\phi(x)$ as a monotonic vector in $\N^\N$ where for every $i\in\N$, the $i$-th entry is the maximum multiplicity of $F_x$ that can be achieved with a repair cost at most $i$, i.e.,
    by adding no more than $i$ facts from the repair database $\Dr$.

    Given two decomposable provenance trees $x$ and $y$ with disjoint supports, consider the Boolean formulas $F_x$ and $F_y$ corresponding to $x$ and $y$ respectively.
    Because the supports are disjoints, these two formulas have to be repaired independently.
    In particular, to repair the formula $F_x \vee F_y$ within a repair cost of $i$, we need
    to find the best way to repair $F_x$ within a repair cost of $i_1$ and $F_y$ within a repair cost of $i_2$ where $i_1 + i_2 = i$. Therefore, $\phi(x \ov\oplus y) = \phi(x) \oplus \phi(y)$
    where $\oplus$ is given by Eq.~\eqref{eq:repair-semiring:oplus}.
    The same reasoning goes for repairing the formula $F_x \wedge F_y$, leading to $\phi(x \ov\otimes y) = \phi(x) \otimes \phi(y)$
    where $\otimes$ is given by Eq.~\eqref{eq:repair-semiring:otimes}.
    Therefore, our definition of $\phi$ satisfies Eq.~\eqref{eq:correct:oplus} and Eq.~\eqref{eq:correct:otimes},
and Theorem~\ref{thm:correct} applies.
The theorem implies that the output of Algorithm~\ref{alg:hierarchical} on $Q$ and $\calD$ is a monotonic vector in $\N^\N$ whose $i$-th entry is the maximum multiplicity for $Q$ that can be achieved with a repair cost at most $i$, as desired.

Finally, we use Theorem~\ref{thm:runtime} to prove that the time complexity is $O((\size{\calD} +\size{\Dr})\cdot \size{\Dr}^2)$
\rev{and the space complexity is $O((\size{\calD} +\size{\Dr})\cdot \size{\Dr})$}.
Given an input instance $(\calD, \Dr, \theta)$ to the \bagsetmax problem from Definition~\ref{defn:bag-set:max}, note that the maximum repair cost $\theta$ is always upper bounded by the size of the repair database $\Dr$.
Let $\bm K = (K, \oplus, \otimes)$ be the \twomonoid  from Definition~\ref{defn:repair-semiring},
and $\bm x_1$ and $\bm x_2$ be two vectors in $K$.
Note that we only need to maintain the first $\theta + 1$ entries of $\bm x_1$ and $\bm x_2$.
Hence, the sizes of these vectors can be assumed to be linear in $\theta$, and by extension,
in $|\Dr|$.
As a result, the $\oplus$ and $\otimes$ operators from Eq.~\eqref{eq:repair-semiring:oplus} and Eq.~\eqref{eq:repair-semiring:otimes} take time $O(\size{\Dr}^2)$
\rev{and space $O(\size{\Dr})$.}
By Theorem~\ref{thm:runtime}, the total number of $\oplus$ and $\otimes$ operations performed by
Algorithm~\ref{alg:hierarchical} is $O(\size{\psi(\calD, \Dr)}) = O(\size{\calD} + \size{\Dr})$,
where $\psi(\calD, \Dr)$ is the $\bm K$-annotated database instance from Definition~\ref{defn:repair-annotated-database}. This gives the desired complexity.
\end{proof}

\subsection{Third Instantiation: {\sc Shapley Value Computation}}

\begin{proof}[Proof of Theorem~\ref{thm:correct:shapley}]
    First, we prove correctness.
In order to apply Theorem~\ref{thm:correct}, we define the function $\phi$ as follows.
For convenience, we extend Definition~\ref{defn:sharpsat} to define $\sat_{Q, \Dexo, \Dendo}(k, b)$ (with an extra Boolean parameter $b \in \B$) as the number of possible subsets $\calD'\subseteq \Dendo$
of size $|\calD'| = k$ such that $Q(\Dexo \cup \calD')$ is $b$.
This way, we can think of $\sat_{Q, \Dexo, \Dendo}(k, b)$ as a vector in $\N^{\N\times \B}$ that is indexed by $(k, b) \in \N\times \B$.
In addition, we also extend Definition~\ref{defn:sharpsat} from Boolean conjunctive queries $Q$ to arbitrary Boolean formulas $F$.
Given a provenance tree $x$, let $F_x$ be the corresponding Boolean formula,
and $\Dendo[F_x]$ be the set of facts in $\Dendo$ that appear in $F_x$.
We define $\phi(x)$ as follows:
\begin{align}
    \phi(x) \defeq \sat_{F_x, \Dexo, \Dendo[F_x]}
\end{align}

Let $x$ and $y$ be two decomposable provenance trees with disjoint supports,
$F_x$ and $F_y$ be the corresponding Boolean formulas, and $\Dendo[F_x]$ and $\Dendo[F_y]$ be the sets of facts in $\Dendo$ that appear in $F_x$ and $F_y$ respectively.
Note that $\Dendo[F_x]$ and $\Dendo[F_y]$ are disjoint, and $\Dendo[F_x \vee F_y]$
is their union.
As a result we have:
\begin{align*}
    \sat_{F_x \vee F_y, \Dexo, \Dendo[F_x \vee F_y]}(k, b) =&\\
    \sum_{k_1, k_2 \in \N: k_1 + k_2 = k}\quad&\sum_{b_1, b_2 \in \B: b_1 \vee b_2 = b}
    \sat_{F_x, \Dexo, \Dendo[F_x]}(k_1, b_1) \times
    \sat_{F_y, \Dexo, \Dendo[F_y]}(k_2, b_2)\\
    \sat_{F_x \vee F_y, \Dexo, \Dendo[F_x \vee F_y]} \quad=&\quad
    \sat_{F_x, \Dexo, \Dendo[F_x]} \quad\oplus\quad \sat_{F_y, \Dexo, \Dendo[F_y]}\\
    \phi(x \ov\oplus y) \quad=&\quad \phi(x) \quad\oplus\quad \phi(y)
\end{align*}
where the $\oplus$ operator above is given by Eq.~\eqref{eq:shapley:oplus}.
Using a similar reasoning for $F_x\wedge F_y$, we have:
\begin{align*}
    \sat_{F_x \wedge F_y, \Dexo, \Dendo[F_x \wedge F_y]}(k, b) =&\\
    \sum_{k_1, k_2 \in \N: k_1 + k_2 = k}\quad&\sum_{b_1, b_2 \in \B: b_1 \wedge b_2 = b}
    \sat_{F_x, \Dexo, \Dendo[F_x]}(k_1, b_1) \times
    \sat_{F_y, \Dexo, \Dendo[F_y]}(k_2, b_2)\\
    \sat_{F_x \wedge F_y, \Dexo, \Dendo[F_x \wedge F_y]} \quad=&\quad
    \sat_{F_x, \Dexo, \Dendo[F_x]} \quad\otimes\quad \sat_{F_y, \Dexo, \Dendo[F_y]}\\
    \phi(x \ov\otimes y) \quad=&\quad \phi(x) \quad\otimes\quad \phi(y)
\end{align*}
Hence, Theorem~\ref{thm:correct} applies.

Finally, we prove that the time complexity is $O((\size{\Dexo} + \size{\Dendo})\cdot \size{\Dendo}^2)$ \rev{and the space complexity is $O((\size{\Dexo} + \size{\Dendo})\cdot \size{\Dendo})$.}
Note that by the definition of $\sat_{Q, \Dexo, \Dendo}(k, b)$, the value of $k$ cannot exceed
$\size{\Dendo}$.
Hence, the $\oplus$ and $\otimes$ operators from Eq.~\eqref{eq:shapley:oplus} and Eq.~\eqref{eq:shapley:otimes} run in time $O(\size{\Dendo}^2)$ \rev{and space $O(\size{\Dendo})$.}
By Theorem~\ref{thm:runtime}, the total number of $\oplus$ and $\otimes$ operations performed is $O(\size{\psi(\Dexo, \Dendo)}) = O(\size{\Dexo} + \size{\Dendo})$,
where $\psi(\Dexo, \Dendo)$ is the $\bm K$-annotated database instance from Definition~\ref{defn:shapley-annotated-database}.
\end{proof}

\section{Concluding Remarks}
\label{sec:conclusion}

In this paper, we established a dichotomy theorem for the computational complexity of the 
{\bagsetmax} problem, parameterized by the collection of self-join-free Boolean conjunctive queries. Furthermore, we showed that the boundary of this dichotomy is defined by the collection of 
hierarchical queries. More importantly, perhaps, we discovered a single unifying polynomial-time algorithm for hierarchical queries that applies not only to {\bagsetmax} but also to {\sc Probabilistic Query Evaluation} and to {\sc Shapley Value Computation}.

The work presented here motivates several different questions, including the following two.

\textbf{Question 1:}
Is there a dichotomy for {\bagsetmax} for the collection of \emph{all} conjunctive queries? If so, what is the boundary of this dichotomy?

Note that in the case of {\sc Probabilistic Query Evaluation},
Dalvi and Suciu \cite{DalviS07} established a dichotomy for the
collection of all conjunctive queries. In the case of  {\sc Shapley Value Computation}, however, no dichotomy theorem for the collection of all conjunctive queries is currently known.

\textbf{Question 2:} Are there other natural algorithmic problems in database theory that are tractable for hierarchical queries, and this tractability can be obtained via the unifying polynomial-time algorithm presented here?

\rev{Among the candidates are the problems of constant-delay enumeration of conjunctive query answers under updates, as well as counting answers under updates~\cite{10.1145/3034786.3034789}.
In both problems,} hierarchical queries once again show up in the dichotomy between tractable and intractable cases. It is not clear, however, how to view this tractability result as an instantiation of the unifying algorithm presented in this paper or what \twomonoid to use.
\rev{Recent work~\cite{lmcs:13059} generalizes this dichotomy for conjunctive queries under updates along two dimensions: (i) queries with {\em free access patterns}, and (ii) queries over {\em probabilistic databases}. It would be interesting to see if these generalizations can also be captured by our  framework.}
\begin{acks}
  This work was initiated while Abo Khamis, Kolaitis, Roy, and Tannen participated in
  the Fall 2023 program on {\em Logic and Algorithms in Databases and AI} at the Simons Institute for the Theory of Computing.
  Part of this work was done while Roy was a visiting scientist at RelationalAI.
\end{acks}

\bibliographystyle{ACM-Reference-Format}
\bibliography{bib}


\end{document}